\declaretheorem[name=Theorem, numberwithin=section]{theorem}
\declaretheorem[name=Lemma, sibling=theorem]{lemma}
\declaretheorem[name=Definition, sibling=theorem]{definition}
\declaretheorem[name=Corollary, sibling=theorem]{corollary}
\declaretheorem[name=Conjecture, sibling=theorem]{conjecture}
\declaretheorem[name=Claim]{claim}
\newcommand{\Oh}{{\cal O}}
\newcommand{\Cc}{{\cal C}}
\newcommand{\Dd}{{\cal D}}
\newcommand{\Hh}{{\cal H}}
\newcommand{\Tt}{{\cal T}}
\newcommand{\Ff}{{\cal F}}
\newcommand{\Pp}{{\cal P}}
\newcommand{\Qq}{{\cal Q}}
\newcommand{\Rr}{{\cal R}}
\newcommand{\Lb}{\overline{\Lambda}}
\newcommand{\N}{{\mathbb N}}
\newcommand{\rename}{\mathsf{rename}}
\newcommand{\join}{\mathsf{join}}
\newcommand{\union}{\mathsf{union}}
\newcommand{\Lev}{\mathsf{Level}}
\newcommand{\Comp}{\mathsf{Complexity}}
\newcommand{\tp}{\mathsf{top}}
\newcommand{\clw}{\diamondsuit}
\newcommand{\ang}[2]{#1\langle #2\rangle}
\renewcommand{\setminus}{-}
\renewcommand{\leq}{\leqslant}
\renewcommand{\geq}{\geqslant}
\def\cqedsymbol{\ifmmode$\lrcorner$\else{\unskip\nobreak\hfil
\penalty50\hskip1em\null\nobreak\hfil$\lrcorner$
\parfillskip=0pt\finalhyphendemerits=0\endgraf}\fi} 
\newcommand{\cqed}{\renewcommand{\qed}{\cqedsymbol}}
\begin{document}

\begin{frontmatter}[classification=text]

\title{Graphs of bounded cliquewidth\\ are polynomially $\chi$-bounded} 

\author[mb]{Marthe Bonamy\thanks{Supported by the ANR project GrR ANR-18-CE40-0032}}
\author[mp]{Micha\l \ Pilipczuk\thanks{Supported by the project TOTAL that has received funding from the European Research Council (ERC) under the European Union's Horizon 2020 research and innovation programme (grant agreement No.~677651)}}

\begin{abstract}
We prove that if $\Cc$ is a hereditary class of graphs that is polynomially $\chi$-bounded, then the class of graphs that admit decompositions into pieces belonging to $\Cc$ along cuts of bounded rank is also polynomially $\chi$-bounded. In particular, this implies that for every positive integer $k$, the class of graphs of cliquewidth at most $k$ is polynomially $\chi$-bounded.
\end{abstract}
\end{frontmatter}

\begin{textblock}{20}(0, 13.3)
\includegraphics[width=40px]{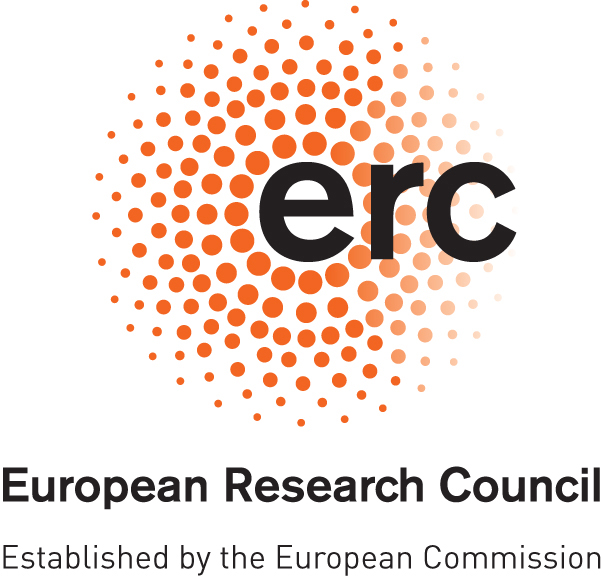}%
\end{textblock}
\begin{textblock}{20}(-0.25, 13.7)
\includegraphics[width=60px]{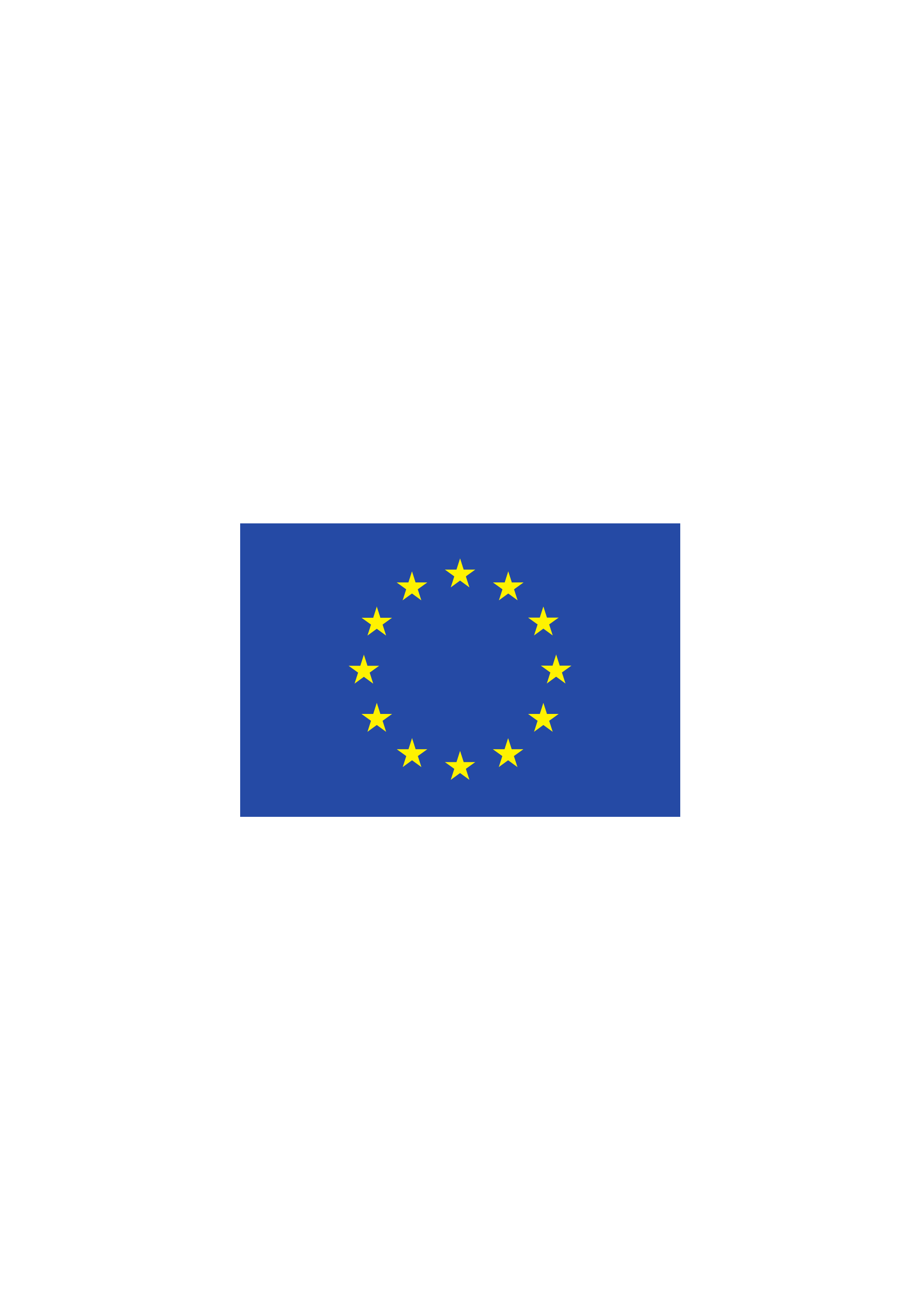}%
\end{textblock}

\section{Introduction}\label{sec:intro}

\paragraph*{$\chi$-boundedness.}
A class of graphs $\Cc$ is {\em{$\chi$-bounded}} if there exists a function $f\colon \N\to \N$, called the {\em{$\chi$-bounding function}}, such that $\chi(G)\leq f(\omega(G))$ for every graph $G\in \Cc$. Here, $\chi(G)$ is the {\em{chromatic number}}  of $G$ --- the least number of colors needed for a proper coloring of $G$ --- and $\omega(G)$ is the {\em{clique number}} of $G$ --- the size of a largest set of pairwise adjacent vertices in $G$. 

The notion of $\chi$-boundedness was proposed by Gy\'arf\'as in~\cite{Gyarfas87} and is a natural generalization of the concept of perfect graphs. Since then, many classes of not necessarily perfect graphs have been shown to be $\chi$-bounded: examples include circular arc graphs~\cite{Gyarfas87},  intersection graphs of axis-parallel boxes in $d$-dimensional space~\cite{Gyarfas87}, circle graphs~\cite{Gyarfas85,Gyarfas85corr}, odd-hole-free graphs~\cite{ScottS16}, long-hole-free graphs~\cite{ChudnovskySS17}, and graphs excluding an induced subdivision of a fixed tree~\cite{Scott97}. Many of those classes have natural interpretations as intersection graphs of geometric objects, but $\chi$-boundedness is not an ubiquitous phenomenon in the geometric setting; for instance, the class of intersection graphs of segments in the plane is not $\chi$-bounded~\cite{PawlikKKLMTW14}. We refer the reader to the recent survey of Scott and Seymour~\cite{ScottS18} for a broader introduction to the topic.

While in the topic of $\chi$-boundedness many basic questions still remain open, even less is known about the optimum asymptotics of $\chi$-bounding functions. Here, we say that a class is {\em{polynomially $\chi$-bounded}} if the $\chi$-bounding function can be chosen to be a polynomial; we can define e.g. linear or quadratic $\chi$-boundedness in the same way. Curiously, we do not know any hereditary graph class that would be $\chi$-bounded, but (provably) not polynomially $\chi$-bounded~\cite{ciacho-thesis}. For several concrete hereditary graph classes one can establish polynomial $\chi$-boundedness, see the recent survey of Schiermeyer and Randerath~\cite{SchiermeyerR19}. However, in many important cases, including the advances on variants of the Gy\'arf\'as-Sumner conjecture and other conjectures of Gy\'arf\'as~\cite{ChudnovskySS17,Scott97,ScottS16}, the known proofs lead only to exponential upper bounds. Also, while the $\chi$-boundedness of circle graphs was known since 80s~\cite{Gyarfas85,Gyarfas85corr}, their polynomial $\chi$-boundedness was established only very recently~\cite{DaviesMcC19}. 

\paragraph*{Cliquewidth.}
The primary object of interest in this work is the class of graphs of cliquewidth at most $k$, for any fixed $k\in \N$. The {\em{cliquewidth}} of a graph $G$ is a parameter that measures the complexity of $G$ in terms of the number of labels needed to construct $G$ by means of certain algebraic operations. Roughly equivalently (up to a multiplicative factor of $2$), one can imagine that a graph of cliquewidth at most $k$ can be hierarchically decomposed into smaller and smaller pieces, up to single vertices, so that in each step of the decomposition we partition the current vertex set into two subsets so that vertices on each side of the partition have only at most $k$ different neighborhoods on the other side. This view is closely related to the notion of {\em{rankwidth}} of $G$, where we measure the complexity of the partition as above not in terms of its {\em{diversity}} --- the number of equivalence classes of the relation of having the same neighborhood on the other side --- but in terms of its {\em{cutrank}} --- the rank over $\mathbb{F}_2$ of the adjacency matrix between the sides. Indeed, it is known that the cliquewidth of a graph of rankwidth $r$ is between $r$ and $2^{r+1}-1$~\cite{OumS06}. Hence, a graph class has bounded rankwidth (i.e., the supremum of the rankwidth of its members is finite) if and only if it has bounded cliquewidth.

Dvo\v{r}\'ak and Kr\'al' proved in~\cite{DvorakK12} that for every $k\in \N$, the class of graphs of cliquewidth at most $k$ is $\chi$-bounded\footnote{Note that Dvo\v{r}\'ak and Kr\'al' state their results in the setting of rankwidth instead of cliquewidth, but, as explained above, the boundedness of these two parameters is equivalent.}. While the obtained $\chi$-bounding function is not stated explicitly in~\cite{DvorakK12}, a careful examination of the proof shows that it is exponential in $\omega(G)$, even for constant $k$. 

In fact, Dvo\v{r}\'ak and Kr\'al' in~\cite{DvorakK12} prove a stronger claim, which can be informally stated as follows: if a graph class $\Cc$ is $\chi$-bounded, then for every $k\in \N$, the class of graphs that admit decompositions into pieces belonging to $\Cc$ along cuts of rank at most $k$ is also $\chi$-bounded. While the result for graphs of bounded cliquewidth follows by applying this statement for a trivial class $\Cc$, one can also obtain other corollaries. For instance, by composing their main result with a decomposition theorem of Geelen for graphs excluding the wheel $W_5$ as a vertex minor~\cite{geelen-thesis},  Dvo\v{r}\'ak and Kr\'al' conclude that this graph class is $\chi$-bounded. Here, the wheel $W_5$ consists of a cycle on $5$ vertices plus one additional vertex connected to all of them.

\paragraph*{Our results.} 
In this work we prove that every graph class of bounded cliquewidth is not only $\chi$-bounded, but in fact polynomially $\chi$-bounded.

\begin{theorem}\label{thm:chibounded}
For every $k \in \N$, the class of graphs of cliquewidth at most $k$ is polynomially $\chi$-bounded.
\end{theorem}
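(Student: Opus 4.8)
The natural strategy is to prove the stronger, more flexible statement that Dvořák and Král' isolated: if $\Cc$ is a hereditary class that is polynomially $\chi$-bounded, then the class $\Cc^{(k)}$ of graphs admitting a decomposition into pieces from $\Cc$ along cuts of rank at most $k$ is again polynomially $\chi$-bounded, with the exponent of the polynomial controlled by $k$ and the exponent for $\Cc$. Theorem~\ref{thm:chibounded} then follows by taking $\Cc$ to be the (trivially, even perfectly, $\chi$-bounded) class of edgeless graphs, since every graph of cliquewidth at most $k$ decomposes into single vertices along cuts of bounded rank (using the $r \le \mathrm{cw} \le 2^{r+1}-1$ relationship between rankwidth and cliquewidth to pass to rankwidth, where the decomposition is literally a rank-decomposition). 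So the first step is to fix a graph $G$ of cliquewidth at most $k$, fix $\omega = \omega(G)$, and fix a rooted decomposition tree $T$ whose leaves are the vertices of $G$ and every cut induced by an edge of $T$ has rank at most $r = r(k)$.

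**The core induction along the decomposition tree.** The heart of the argument should be a bottom-up induction on $T$: to each node $t$ we associate the subgraph $G_t$ induced by the leaves below $t$, and we maintain, for each of the (at most $2^r$) "types" of vertices of $G_t$ as seen from the rest of the graph, a proper coloring of $G_t$ using few colors, where moreover the coloring is compatible across the partition at $t$ in a controlled way. The key structural fact to exploit is that the cut at $t$ has rank at most $r$, so vertices on each side are partitioned into at most $2^r$ neighborhood classes on the other side. When we merge two children $t_1, t_2$ at a node $t$, the bipartite graph between $V(G_{t_1})$ and $V(G_{t_2})$ is a "blow-up" of a fixed bipartite graph on at most $2^r + 2^r$ vertices. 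The clique number of $G_t$ is still at most $\omega$, which forces a structural restriction on how these neighborhood classes can interact — essentially, the classes that are pairwise completely adjacent form a clique-like pattern of size $O(\omega)$ on each side. I would extract a bound on the chromatic number of $G_t$ of the form $p(\omega)$ for a fixed polynomial $p$ depending only on $k$, by arguing that a proper coloring of $G_{t_1}$ and one of $G_{t_2}$, each using $p(\omega)$ colors, can be reconciled into a proper coloring of $G_t$ using the same number of colors, after possibly permuting colors within each neighborhood class and paying an additive or multiplicative cost that does \emph{not} accumulate over the tree.

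**The crucial non-accumulation trick.** The main obstacle — and the reason a naive induction gives an exponential bound — is preventing the number of colors from blowing up multiplicatively (by a factor depending on $\omega$ or on $2^r$) at every one of the potentially $\Omega(n)$ merge steps. The fix, which I expect to be the technically delicate part, is to not re-color from scratch at each node but to carry along a more refined invariant: something like a coloring of $G_t$ by polynomially many colors \emph{together with} a "clique cover"-type or "ordering"-type certificate on the neighborhood classes, so that the reconciliation at a merge only permutes colors and resolves conflicts locally within the $O(2^r)$ classes, using the bound on $\omega$ to ensure that only $O(\omega^{c})$ fresh colors are ever needed in total. One clean way to set this up is to bound $\chi(G)$ by bounding a related quantity that behaves additively or sub-multiplicatively under the decomposition — for instance, iteratively peeling off independent (or bounded-clique-number, perfectly-colorable) sets, where each peel corresponds to handling one neighborhood-class pattern, and showing $O(\log \omega)$ or $O(1)$ many levels of peeling suffice at each scale while the sizes of the patterns are geometrically controlled. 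Getting the bookkeeping so that the final exponent is an absolute function of $k$ (independent of $\omega$), and verifying that hereditariness of $\Cc$ is used exactly where the sub-pieces $G_t$ must themselves lie in $\Cc^{(k)}$ or be colorable via the $\Cc$-hypothesis at the leaves, is where the real work lies.

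**Wrapping up.** Once the induction yields $\chi(G_t) \le p(\omega)$ for the root $t$ (so for $G$ itself), with $p$ a polynomial whose degree depends only on $k$ (through $r = 2^{k+1}-1$ and the constant exponent for edgeless graphs, which is $1$), Theorem~\ref{thm:chibounded} follows immediately. I would also double-check the base case: a leaf corresponds to a single vertex, trivially $1$-colorable, so the $\chi$-bounding function of the trivial class $\Cc$ enters only as the constant $1$, and all the polynomial growth in $\omega$ is generated purely by the merging process and the rank bound $r$.
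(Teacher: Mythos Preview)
Your setup is right, and you have correctly put your finger on the obstacle: a naive bottom-up induction pays a multiplicative factor at every merge and blows up exponentially, exactly as in Dvo\v{r}\'ak--Kr\'al'. But the ``non-accumulation trick'' you sketch is not a trick at all --- it is a hope. Carrying a refined invariant or a clique-cover certificate across $\Omega(n)$ merges, each of which mixes $2^r$ neighborhood classes in an arbitrary bipartite pattern, does not obviously buy you anything beyond what Dvo\v{r}\'ak and Kr\'al' already did. You have not named any concrete invariant that (i) implies a polynomial $\chi$-bound and (ii) is preserved at a merge without a multiplicative cost. The ``peeling'' idea does not work either: there is no reason the number of peels should be $O(\log\omega)$ or $O(1)$ independently of the depth of the tree.

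The paper's actual mechanism is entirely different and is the genuinely missing idea in your proposal. One does not induct along the decomposition tree at all. Instead, one views the tree as labelled by elements of the finite semigroup $\Ff$ of functions $[k]\to[k]$ (recording how labels transform along edges), and applies Colcombet's deterministic version of Simon's Factorization Forest Theorem. This produces a hierarchy of factorizations of bounded depth $p=2^{O(k\log k)}$ such that at each level the quotient decomposition is either \emph{shallow} (depth at most two) or \emph{splendid} (the set of edge-labels is forward Ramsey in $\Ff$). Shallow decompositions are handled by a trivial product argument. Splendid decompositions have the key property that, after splitting edges according to the parity of the depth at which they are introduced, each half lies in the substitution-closure $\Cc^\star$ of the base class; one then invokes the Chudnovsky--Penev--Scott--Trotignon theorem that $\Cc^\star$ is polynomially $\chi$-bounded whenever $\Cc$ is. The induction is over the $p$ levels of the Simon hierarchy, not over the nodes of $T$, so the polynomial degree grows by a bounded factor exactly $p$ times rather than $\Omega(n)$ times.
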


Let us note that in Theorem~\ref{thm:chibounded}, the degree of the $\chi$-bounding polynomial for graphs of cliquewidth $k$ grows with $k$. As we explain in Section~\ref{sec:asymptotics} (see Lemma~\ref{lem:grows-with-k}), this is unavoidable. 

We also remark that independently of us, Ne\v{s}et\v{r}il et al. proved that for every fixed $k\in \N$, graphs of {\em{linear cliquewidth}} at most $k$ are linearly $\chi$-bounded~\cite{NesetrilORS19}. In their work, the result comes as a by-product in the proof of a statement about logical interpretability of graphs of bounded linear cliquewidth with no large half-graphs in graphs of bounded pathwidth. They provide two proofs of their main claims, one using the same tool as we do --- Simon's factorization --- and a second more direct and involved, but yielding better asymptotic bounds. In fact, a close inspection shows that their first proof restricted to arguing only linear $\chi$-boundedness is essentially equivalent to our proof restricted to classes of bounded linear cliquewidth, with the exception that a suitable amortization argument (disguised as applying the fact that cographs are perfect) is used to argue that the $\chi$-bounding function is linear, instead of just polynomial.


In fact, our work applies not only to classes of bounded cliquewidth. We prove a more general claim analogous to that of Dvo\v{r}\'ak and Kr\'al': if $\Cc$ is a polynomially $\chi$-bounded graph class and $k\in \N$, then the class of graphs that can be decomposed into pieces from $\Cc$ along cuts of diversity at most $k$ is also polynomially $\chi$-bounded. A formal statement of this result requires some technical definitions and will be presented in Section~\ref{sec:prelims}, see Theorem~\ref{thm:main-technical} there. 

It seems that our main result may be applicable for establishing polynomial $\chi$-boundedness of classes defined by forbidding fixed vertex-minors, similarly to the work of Dvo\v{r}\'ak and Kr\'al'~\cite{DvorakK12}; we discuss these connections in Section~\ref{sec:wheel-free}. In particular, we generalize two related statements that were used in this context: the result of Chudnovsky et al.~\cite{ChudnovskyPST13} that the closure of a polynomially $\chi$-bounded class under the {\em{substitution}} operation is also polynomially $\chi$-bounded, and the recent result of Kim et al.~\cite{KiKOS19} that the same holds also for the operation of taking {\em{$1$-joins}}. Indeed, these two cases follow from taking $k=1$ and $k=2$ in our main theorem, respectively. In fact, in our proof we rely on the result of Chudnovsky et al.~\cite{ChudnovskyPST13}.

\paragraph*{Our techniques.}
The first ingredient of the proof of our main result is the deterministic variant of Simon's Factorization Forest Theorem due to Colcombet~\cite{Colcombet07}, which provides a factorization theorem for trees labelled with elements of a finite semigroup. The idea is that we can view a decomposition of a graph using cuts of diversity at most $k$ as a tree labelled with elements of the semigroup of relabelings (functions) on a set of $k$ labels. Thus, by applying Colcombet's result we can hierarchically factorize every such decomposition so that the following properties hold:
\begin{itemize}
    \item the factorization has ``depth'' $2^{\Oh(k\log k)}$; and
    \item in every step we partition the current decomposition into factors (which will be factorized further in the following steps) so that the overall structure of factors is a tree that either has depth at most two, or satisfies a certain Ramsey-type condition.
\end{itemize} 
Thus, we can reduce the original statement of the main result to $2^{\Oh(k\log k)}$ applications of a weaker statement, where the provided decomposition of the graph either has depth at most two or enjoys strong Ramsey properties. Decompositions of the former type are called {\em{shallow}}, and decompositions of the latter type are called {\em{splendid}}.

While the case of decompositions of depth two follows from a straightforward product argument, the case of splendid decompositions requires a non-trivial reasoning. For this, we use the aforementioned result of Chudnovsky et al.~\cite{ChudnovskyPST13} about closures of polynomially $\chi$-bounded classes under the substitution operation. The splendidness of the decomposition implies that if we partition edges of the graph into those ``introduced'' at odd and even levels, in both cases we observe a subgraph that can be obtained from graphs from the base class $\Cc$ by a repeated application of substitutions. Hence, we can apply the result of Chudnovsky et al.~\cite{ChudnovskyPST13} to both these subgraphs and conclude by taking the product of the obtained colorings.

\section{Preliminaries}\label{sec:prelims}

For a positive integer $p$, we write $[p]=\{1,\ldots,p\}$.

\paragraph*{Graph terminology.} In this work we consider only finite, undirected graphs. For a graph $G$, the vertex and edge sets of $G$ are denoted by $V(G)$ and $E(G)$, respectively. The {\em{neighborhood}} of a vertex $u$ consists of all vertices adjacent to $u$ and is denoted by $N^G(u)$. For a graph~$G$, we write $\omega(G)$ for the size of the largest {\em{clique}} in $G$, i.e., a set of pairwise adjacent vertices. A {\em{coloring}} of $G$ is any function that maps vertices of $G$ to some set of colors, and it is {\em{proper}} if for every edge $e$ of $G$, the endpoints of $e$ receive different colors in the coloring. For a graph~$G$, the {\em{chromatic number}} of $G$, denoted $\chi(G)$, is the least number of colors needed for a proper coloring of $G$.

A {\em{graph class}} is just a set of graphs, usually infinite. A class of graphs $\Cc$ is {\em{hereditary}} if it is closed under taking induced subgraphs, that is, if $G\in \Cc$ then every graph obtained from $G$ by deleting vertices also belongs to $\Cc$. A graph class $\Cc$ is {\em{$\chi$-bounded}} if there exists a function $f\colon \N\to \N$ such that $\chi(G)\leq f(\omega(G))$ for every $G\in \Cc$. If the function $f$ can be chosen to be a polynomial, then we say that $\Cc$ is {\em{polynomially $\chi$-bounded}}.

A {\em{rooted tree}} is a connected graph without cycles with one vertex designated to be the root. As all decompositions in this work will have a form of rooted trees, for distinguishment we usually use the term {\em{node}} for a vertex of a tree. If $T$ is a rooted tree with nodes $x$ and $y$, then we say that $x$ is an {\em{ancestor}} of $y$, equivalently that $y$ is a {\em{descendant}} of $x$, if $x$ lies on the unique path in $T$ from $y$ to the root. Note that every node is considered its own ancestor and descendant. If additionally $x$ and $y$ are adjacent, we say that $x$ is a {\em{parent}} of $y$ and equivalently that $y$ is a {\em{child}} of $x$. Note that while every node has a unique parent --except for the root which has none-- a node may have arbitrarily many children.

\paragraph*{Cliquewidth.} A {\em{$k$-labelled}} graph is a graph $G$ together with a labelling $\lambda\colon V(G)\to [k]$. On $k$-labelled graphs we define the following operations:
\begin{itemize}
    \item for $i,j\in [k]$, $i\neq j$, the operation $\join_{i,j}(\cdot)$ adds all possible edges with one endpoint of label~$i$ and the other of label~$j$;
    \item for $i,j\in [k]$, $i\neq j$, the operation $\rename_{i\to j}(\cdot)$ changes the label of each vertex labelled $i$ to~$j$;
    \item the operation $\union(\cdot,\cdot)$ outputs the disjoint union of two $k$-labelled graphs.
\end{itemize}
Note that $\join_{i,j}(\cdot)$ and $\rename_{i\to j}(\cdot)$ are unary operations, that is, they are applied to a single $k$-labelled graph, while $\union(\cdot,\cdot)$ is a binary operation. The {\em{cliquewidth}} of an (unlabelled) graph $G$ is the least integer $k$ such that some $k$-labelling of $G$ can be constructed using the operations described above from single-vertex $k$-labelled graphs.

Note that the construction of a graph $G$ of cliquewidth $k$, as described above, naturally gives rise to a term over an algebra of operations $\join_{i,j}(\cdot)$, $\rename_{i\to j}(\cdot)$, and $\union(\cdot,\cdot)$ whose leaves are single-vertex $k$-labelled graphs. This term is called a {\em{$k$-expression}} that constructs $G$. 

\paragraph*{Decompositions.}
As outlined in Section~\ref{sec:intro}, we will actually prove a result stronger than claimed in Theorem~\ref{thm:chibounded}: whenever $\Cc$ is a hereditary class of graphs that is polynomially $\chi$-bounded, the class of graphs that admit decompositions using cuts of bounded diversity into pieces that belong to $\Cc$ is also polynomially $\chi$-bounded. To state this result formally, we need to introduce the notion of a decomposition that will be used. It is actually a rooted variant of {\em{${\cal G}$-bounded decompositions}} used by Dvo\v{r}\'ak and Kr\'al'~\cite{DvorakK12}, hence we use a similar terminology, but adjusted to the rooted setting.

\begin{definition}
A {\em{decomposition}} of a graph $G$ is a rooted tree $T$ together with a function $\eta$ that maps vertices of $G$ to nodes of $T$. 
\end{definition}

Note in the above definition, there are no requirements about surjectivity or injectivity of the mapping $\eta$. In particular, many vertices of $G$ may be mapped to the same node of $T$. Also, there are no restrictions on the number of children of any node in $T$.

Whenever a decomposition $\Tt=(T,\eta)$ of a graph $G=(V,E)$ is clear from the context, we use the following notation. For two nodes $x,y$ of $T$, we write $x\preceq y$ if $x$ is an ancestor of $y$ in $T$; recall that we consider every node to be its own ancestor as well. For an edge $e=uv$ of $G$, we write
$$\eta(e)=\textrm{the least common ancestor of }\eta(u)\textrm{ and }\eta(v)\textrm{ in }T.$$
For a node $x$ of $T$, we write
$$\ang{V}{x}=\{v\in V\colon \eta(v)\succeq x\},\qquad \ang{E}{x}=\{e\in E\colon \eta(e)=x\},\qquad \ang{G}{x}=(\ang{V}{x},\ang{E}{x}).$$
If the decomposition $\Tt$ for which the above objects are defined is not clear from the context, we write it as the second argument in the brackets, i.e. we write $\ang{V}{x,\Tt},\ang{E}{x,\Tt},\ang{G}{x,\Tt}$.

Let us remark that we believe that the graph $\ang{G}{x}$, as defined above, may be the right analogue of the {\em{torso}} of a node of $x$ from the context of classic tree decompositions. Note that for every child $y$ of $x$, the vertices of $\ang{V}{y}$ form an independent set in $\ang{G}{x}$.

\begin{definition}
For a class of graphs $\Cc$, a decomposition $(T,\eta)$ of a graph $G$ is {\em{$\Cc$-governed}} if $\ang{G}{x}\in \Cc$ for every node $x$ of $T$.
\end{definition}

For a node $x$ of $T$, we introduce the following equivalence relation $\sim_x$ on $\ang{V}{x}$:
$$u\sim_x v\quad \Leftrightarrow \quad N^G(u)\setminus \ang{V}{x} = N^G(v)\setminus \ang{V}{x}.$$
In other words, $u,v\in \ang{V}{x}$ are considered equivalent if they have exactly the same neighborhood outside of $\ang{V}{x}$. Note that if $y$ is a descendant of $x$, then $u\sim_y v$ entails $u\sim_x v$.

\begin{definition}
Let $(T,\eta)$ be a decomposition of a graph $G$.
The {\em{diversity}} of a node $x$ of $T$ is the number of equivalence classes of $\sim_x$. The {\em{diversity}} of the decomposition $(T,\eta)$ is the largest diversity among the nodes of $T$.
\end{definition}

Finally, we define the depth of a decomposition.

\begin{definition}
Let $(T,\eta)$ be a decomposition of a graph $G$. The {\em{depth}} of $(T,\eta)$ is the maximum number of edges in a path from the root to a node in $T$. A decomposition is {\em{shallow}} if it has depth at most two.
\end{definition} 

\paragraph*{Statement of the results.}
With all the terminology prepared, we can formally state the main result of this paper.

\begin{theorem}\label{thm:main-technical}
Let $\Cc$ be a hereditary graph class that is polynomially $\chi$-bounded. Then for every fixed $k\in \N$, the class of graphs that admit $\Cc$-governed decompositions of diversity at most $k$ is also polynomially $\chi$-bounded.
\end{theorem}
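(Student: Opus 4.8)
The plan is to reduce the statement, via a factorization theorem, to two manageable special cases --- \emph{shallow} decompositions (depth at most two) and \emph{splendid} decompositions (satisfying a Ramsey/idempotency condition) --- and then to handle the shallow case by a direct product argument and the splendid case by reducing to the substitution-closure theorem of Chudnovsky, Penev, Scott and Trotignon~\cite{ChudnovskyPST13}. First I would model a given $\Cc$-governed decomposition $\Tt=(T,\eta)$ of diversity at most $k$ as a tree over a finite semigroup. For each node $x$, identify the at most $k$ classes of $\sim_x$ with a subset of $[k]$; since $\ang{V}{y}\subseteq\ang{V}{x}$ and $\sim_y$ refines $\sim_x$ whenever $x$ is an ancestor of $y$, every non-root node $y$ with parent $x$ yields a well-defined map $g_y\colon[k]\to[k]$ sending the $\sim_y$-class of a vertex to its $\sim_x$-class (extended arbitrarily on unused labels). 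Labelling the edge from $y$ to its parent by $g_y$, an element of the transformation monoid $S$ on $[k]$ with $|S|=k^k=2^{\Oh(k\log k)}$, I would then apply the deterministic version of Simon's Factorization Forest Theorem for trees due to Colcombet~\cite{Colcombet07}. This produces a hierarchy of \emph{factors} of height polynomial in $|S|$, i.e.\ $2^{\Oh(k\log k)}$, in which each factor is either of ``internal depth'' at most two or \emph{idempotent}, meaning that the composition of labels across it collapses to a single idempotent $e=e\cdot e$ of $S$.

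Next I would set up the induction driven by this factorization. Define classes $\Cc_0,\Cc_1,\Cc_2,\ldots$ by letting $\Cc_0$ be the hereditary closure of $\Cc$, and $\Cc_{i+1}$ the hereditary closure of the class of graphs admitting a $\Cc_i$-governed decomposition of diversity at most $k$ that is either shallow or splendid. Peeling the factorization off one level at a time --- each peel replacing a factor by the graph on its $\sim$-classes, whose ``torsos'' remain the $\ang{G}{x}$ from before --- shows that any graph with a $\Cc$-governed decomposition of diversity at most $k$ lies in $\Cc_h$ for $h=2^{\Oh(k\log k)}$. Thus it suffices to prove by induction on $i$ that every $\Cc_i$ is polynomially $\chi$-bounded, the inductive step requiring exactly two lemmas: one for shallow and one for splendid $\Cc'$-governed decompositions, where $\Cc'$ ranges over hereditary polynomially $\chi$-bounded classes. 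Since $k$ is fixed, $h$ is a constant, so composing polynomial bounds $h$ times still yields a polynomial bound (of large but constant degree), and heredity extends it to the whole class. Theorem~\ref{thm:chibounded} then follows by taking $\Cc$ to be the class of cographs (which is hereditary and perfect, hence linearly $\chi$-bounded), since a $k$-expression for a graph $G$ directly gives a $\Cc$-governed decomposition of diversity at most $k$: at each join node the set of edges introduced forms a complete bipartite graph plus isolated vertices.

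For the shallow case, let $f$ be a $\chi$-bounding polynomial for $\Cc'$ and let $(T,\eta)$ be a shallow $\Cc'$-governed decomposition of $G$. Each $\ang{G}{x}$ is a subgraph of $G$, so $\omega(\ang{G}{x})\leq\omega(G)$ and $\ang{G}{x}\in\Cc'$ admits a proper coloring $c_x$ with at most $f(\omega(G))$ colors. For $\ell\in\{0,1,2\}$ and a vertex $v$, put $\phi_\ell(v)=c_x(v)$ if $\eta(v)$ has a (necessarily unique) ancestor $x$ at depth $\ell$, and $\phi_\ell(v)=\bot$ otherwise. Then $v\mapsto(\phi_0(v),\phi_1(v),\phi_2(v))$ is a proper coloring of $G$: an edge $uv$ has $\eta(uv)=x$ at some depth $\ell\leq 2$, both endpoints lie in $\ang{V}{x}$, and $c_x(u)\neq c_x(v)$ makes the tuples differ in coordinate $\ell$. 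This uses at most $(f(\omega(G))+1)^3$ colors, polynomial in $\omega(G)$.

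The splendid case is the main obstacle. Given a splendid $\Cc'$-governed decomposition $(T,\eta)$ of $G$ of diversity at most $k$, with idempotent $e$ and an associated level function on $V(T)$, I would split $E(G)=E_{\mathrm{odd}}\sqcup E_{\mathrm{even}}$ according to the parity of the level of the node $\eta(uv)$ at which each edge $uv$ is introduced, and set $G_{\mathrm{odd}}=(V(G),E_{\mathrm{odd}})$ and $G_{\mathrm{even}}=(V(G),E_{\mathrm{even}})$. The crux is to show that both $G_{\mathrm{odd}}$ and $G_{\mathrm{even}}$ lie in the closure of $\Cc'$ under substitution: fixing a parity, the top relevant level contributes a graph on $\sim$-classes that is an induced subgraph of some $\ang{G}{x}\in\Cc'$, and idempotency of $e$ guarantees that each such class --- equipped with the edges introduced two levels lower and below --- again has exactly this shape, so it may legitimately be substituted in, and recursing down the constantly many levels exhibits $G_{\mathrm{odd}}$ and $G_{\mathrm{even}}$ as iterated substitutions of members of $\Cc'$. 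Pinning down the definition of splendidness so that ``same neighbourhood outside'' at the idempotent scale becomes precisely the module condition demanded by substitution --- and dealing with degenerate or isolated classes --- is the delicate part of the whole argument. Granting it, the theorem of Chudnovsky, Penev, Scott and Trotignon~\cite{ChudnovskyPST13} shows the substitution closure of the polynomially $\chi$-bounded class $\Cc'$ is polynomially $\chi$-bounded, so $G_{\mathrm{odd}}$ and $G_{\mathrm{even}}$ admit proper colorings $c_{\mathrm{odd}},c_{\mathrm{even}}$ with a polynomial-in-$\omega(G)$ number of colors; since every edge of $G$ lies in $G_{\mathrm{odd}}$ or in $G_{\mathrm{even}}$, the product coloring $v\mapsto(c_{\mathrm{odd}}(v),c_{\mathrm{even}}(v))$ is proper for $G$, completing the argument.
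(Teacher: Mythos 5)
Your overall architecture is the same as the paper's: encode the decomposition as a tree labelled by the transformation monoid on $[k]$, use Colcombet's deterministic factorization theorem to reduce, in $2^{\Oh(k\log k)}$ inductive steps, to decompositions that are either shallow or Ramsey-like, handle the shallow case by a product coloring (your three-coordinate version is fine, indeed slightly more robust than a two-coordinate one), and handle the Ramsey-like case by splitting the edges by parity of the depth of $\eta(uv)$, showing both halves lie in the substitution closure of the base class, and invoking the theorem of Chudnovsky, Penev, Scott and Trotignon together with a product coloring. The stratification step you take largely as a black box; extracting from Colcombet's theorem a factorization hierarchy of trees in which every quotient is shallow or forward Ramsey, and checking that the quotient decomposition is again of diversity at most $k$ and governed by the previous class, takes real work in the paper (its Lemmas on $\Hh_0\subseteq\cdots\subseteq\Hh_{3|S|}$ and on the quotient $\sfrac{\Tt}{\Pp}$), but this is a defensible appeal to known machinery.

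The genuine gap is exactly the step you label ``the delicate part'' and then grant: the proof that $G_{\mathrm{odd}},G_{\mathrm{even}}$ lie in the substitution closure of $\Cc'$. Two things are missing. First, the relevant hypothesis is not that the labels compose to ``a single idempotent $e$'': splendidness must be the forward Ramsey condition $e\cdot f=e$ for \emph{all pairs} of labels occurring on the tree, and from this one has to extract structure in the monoid $\Ff$ of maps $[k]\to[k]$, namely an equivalence relation $\cong$ on $[k]$ such that every occurring label is constant on each $\cong$-class and maps it into itself. Second, even with this, the module condition needed for substitution does not hold for $G$ directly; the paper first partitions $V(G)$ into the (at most $k$) parts determined by $\cong$ and colors each part separately with disjoint palettes (a factor $k$ loss), so that on each part all edge labels act as constant functions. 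Only then does the parity split work: for a grandchild $z$ of a node $x$ of the relevant parity, constancy of the label on the edge above $z$ forces all vertices of $\ang{V}{z}$ into one class of $\sim_y$, hence they have identical neighborhoods outside $\ang{V}{y}$, and since edges inside $\ang{V}{y}$ leaving $\ang{V}{z}$ are introduced at the opposite parity, $\ang{V}{z}$ is a module of the parity subgraph below $x$; a bottom-up induction then exhibits that subgraph as a substitution into an induced subgraph of $\ang{G}{x}\in\Cc'$ (also needing edgeless graphs added to $\Cc'$ to absorb the odd levels). None of this is routine bookkeeping---it is the mathematical core of the theorem---so as it stands your argument for the splendid case is a plan rather than a proof, even though the plan is the correct one.
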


Now, Theorem~\ref{thm:chibounded} is directly implied by Theorem~\ref{thm:main-technical} and the next lemma, which follows almost immediately from the definition of cliquewidth.

\begin{lemma}\label{lem:cwk}
For every $k\in \N$, every graph of cliquewidth at most $k$ admits a decomposition of diversity at most $k$ governed by the class of bipartite graphs. 
\end{lemma}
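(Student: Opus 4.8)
The plan is to start from a $k$-expression constructing (a $k$-labelling of) a graph $G$ of cliquewidth at most $k$, and to turn the syntax tree of this expression into a decomposition $(T,\eta)$ in the sense defined above. First I would take the $k$-expression $\phi$ that constructs $G$ and let $T_0$ be its syntax tree: the leaves are single-vertex $k$-labelled graphs, the $\union(\cdot,\cdot)$ nodes are binary, and the $\join_{i,j}(\cdot)$ and $\rename_{i\to j}(\cdot)$ nodes are unary. Each leaf $\ell$ corresponds to exactly one vertex $v_\ell\in V(G)$, and I define $\eta(v_\ell)$ to be (the node of $T$ corresponding to) $\ell$; the tree $T$ is just $T_0$, rooted at the node computing $\phi$. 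Since every vertex of $G$ is introduced at a unique leaf, $\eta$ is well defined; injectivity and surjectivity are not required.

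Next I would verify the diversity bound. Fix a node $x$ of $T$ and let $G_x$ be the $k$-labelled graph produced at $x$ by the sub-expression rooted at $x$; note $V(G_x)=\ang{V}{x}$, since $\ang{V}{x}$ is exactly the set of vertices introduced at leaves below $x$. The key observation is the standard invariant of $k$-expressions: for $u,v\in \ang{V}{x}$, if $u$ and $v$ have the same label in $G_x$, then for the rest of the construction they are treated identically, so $N^G(u)\setminus\ang{V}{x}=N^G(v)\setminus\ang{V}{x}$. Formally one argues by induction on the part of the expression above $x$: the operations $\join_{i,j}$, $\rename_{i\to j}$ and $\union$ all act on vertices only through their current labels, hence two vertices with equal label at $x$ continue to have equal labels (and equal "future" neighbourhoods) at every ancestor of $x$, in particular they acquire the same neighbours outside $\ang{V}{x}$. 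Therefore the equivalence relation $\sim_x$ refines the "same label in $G_x$" relation, which has at most $k$ classes; so the diversity of $x$ is at most $k$, and the diversity of $(T,\eta)$ is at most $k$.

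It remains to check that $(T,\eta)$ is governed by the class of bipartite graphs, i.e.\ that $\ang{G}{x}=(\ang{V}{x},\ang{E}{x})$ is bipartite for every node $x$. Recall $\ang{E}{x}$ is the set of edges $e=uv$ of $G$ with $\eta(e)=x$, i.e.\ whose endpoints' leaves have least common ancestor $x$. If $x$ is a leaf or a unary ($\join$ or $\rename$) node it has at most one child, so no edge of $G$ can have its two endpoints in different child-subtrees of $x$; thus $\ang{E}{x}=\emptyset$ and $\ang{G}{x}$ is edgeless, hence bipartite. If $x$ is a $\union(\cdot,\cdot)$ node with children $y_1,y_2$, then an edge with least common ancestor $x$ has one endpoint in $\ang{V}{y_1}$ and the other in $\ang{V}{y_2}$; since $\ang{V}{y_1}$ and $\ang{V}{y_2}$ partition $\ang{V}{x}$, the bipartition $(\ang{V}{y_1},\ang{V}{y_2})$ witnesses that $\ang{G}{x}$ is bipartite. (Indeed, as already noted in the excerpt, for every child $y$ of $x$ the set $\ang{V}{y}$ is independent in $\ang{G}{x}$; when $x$ has exactly two children this immediately gives bipartiteness.) This covers all node types and completes the argument.

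The only mildly delicate point is the diversity bound, specifically the inductive claim that two vertices sharing a label at $x$ retain identical external neighbourhoods; everything else is a direct unwinding of the definitions of $\ang{V}{x}$, $\ang{E}{x}$, and of the three cliquewidth operations. I do not expect any real obstacle: the statement is, as the paper says, almost immediate from the definition of cliquewidth.
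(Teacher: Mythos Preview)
Your proposal is correct and follows essentially the same approach as the paper: use the syntax tree of a $k$-expression as $T$, map each vertex to its leaf, bound diversity by the label invariant, and get bipartiteness from the fact that internal nodes have at most two children while $\eta^{-1}(x)=\emptyset$. Your case analysis is in fact slightly more careful than the paper's sketch (which contains a minor slip, calling the two-child case a $\join$ node rather than a $\union$ node), but the argument is the same.
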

\begin{proof}
Let $G$ be a graph of cliquewidth at most $k$ and let $\tau$ be a $k$-expression constructing (some $k$-labelling of) $G$. 
Let $T$ be the rooted tree of the $k$-expression $\tau$, i.e., the tree of the subterms of $\tau$ with the descendant relation defined by the subterm relation. Note that the vertices of $G$ are in one-to-one correspondence with the leaves of $\tau$ where they are introduced, hence let $\eta\colon V(G)\to V(T)$ map every vertex of $G$ to the corresponding leaf of $T$. Then $(T,\eta)$ is a decomposition of $G$. 

Observe that for every node of $x$, say corresponding to a subterm $\tau_x$ of $\tau$, the set of vertices introduced in the descendants of $x$ is exactly $\ang{V}{x}$. By the definition of a $k$-expression, all vertices assigned the same label in the graph constructed by $\tau_x$ have the same neighborhood outside of $\ang{V}{x}$ in $G$, hence $\sim_x$ has at most $k$ equivalence classes. It follows that $(T,\eta)$ has diversity at most $k$.

To see that $(T,\eta)$ is governed by the class of bipartite graphs, observe that every internal node $x$ has either one child (if it corresponds to a $\rename$ operation) or two children (if it corresponds to a $\join$ operation) and in both cases $\eta^{-1}(x)=\emptyset$, hence $\ang{G}{x}$ is bipartite. On the other hand, if $x$ is a leaf then $\ang{G}{x}$ has one vertex, hence it is bipartite as well.
\end{proof}


As the proof of Theorem~\ref{thm:main-technical} spans the whole remainder of the paper, from now on we fix the integer $k\in \N$ and the hereditary graph class $\Cc$. Also, we write $\Cc^{\clw}$ for the latter class considered in Theorem~\ref{thm:main-technical}, i.e. the class of graphs that admit $\Cc$-governed decompositions of diversity at most $k$.

\section{Stratifying the class $\Cc^{\clw}$}

The first step in the proof of Theorem~\ref{thm:main-technical} is to stratify the inclusion $\Cc\subseteq \Cc^{\clw}$. That is, we introduce a sequence of classes $\Cc=\Dd_0\subseteq \Dd_1\subseteq \ldots\subseteq \Dd_p=\Cc^\clw$, for some $p\in \N$ depending only on $k$, with the following property: for every $i\in \{1,\ldots,p\}$, every graph $G\in \Dd_i$ admits a $\Dd_{i-1}$-governed decomposition of diversity at most $k$ that is moreover somehow well-behaved, to be defined in a moment. Then Theorem~\ref{thm:main-technical} can be proved by an induction over the sequence $\Dd_1,\ldots,\Dd_p$, where every step of the induction boils down to applying Theorem~\ref{thm:main-technical} under the additional supposition of well-behavedness of the decomposition. Thus, we reduce Theorem~\ref{thm:main-technical} to a weaker statement where we have an additional assumption about the decomposition.

To formally explain ``well-behavedness'' we need some additional terminology. We let $\Ff$ be the semigroup of all functions from $[k]$ to $[k]$, with composition being the semigroup action. That is, for $f,g\in \Ff$ we write $f\cdot g\in \Ff$ for the function that maps every $i\in [k]$ to $f(g(i))$.

\begin{definition}
Suppose $(T,\eta)$ is a decomposition of a graph $G$ with diversity at most $k$.
A {\em{tagging}} of $(T,\eta)$ is a family $\Lambda=(\lambda^x)_{x\in V(T)}$ such that every $\lambda_x$ is a mapping from $\ang{V}{x}$ to $[k]$ with the following two properties: $(i)$ whenever $\lambda^x(u)=\lambda^x(v)$ for some $u,v\in \ang{V}{x}$, then $u\sim_x v$, and $(ii)$ whenever $\lambda^x(u)=\lambda^x(v)$ for some $u,v\in \ang{V}{x}$ then $\lambda^y(u)=\lambda^y(v)$ for any ancestor $y$ of $x$. 

With a tagging $\Lambda$ of a decomposition $(T,\eta)$ of $G$ we associate a {\em{labelling}} $\Lb\colon E(T)\to \Ff$ defined as follows: for every edge $e$ of $T$, say $e=xy$ where $x$ is the parent of $y$, we set $\Lb(e)\in \Ff$ to be the function that maps every $i\in [k]$ to $\lambda^x(u)$ if there is some $u\in \ang{V}{y}$ satisfying $\lambda^y(u)=i$ and to $1$ otherwise.
\end{definition}

Note that every decomposition of diversity at most $k$ has a tagging: any tagging can be obtained by enumerating the classes of $\sim_x$ with numbers from $1$ to $k$, in any way. We will, however, consider decompositions that admit taggings which give rise to somewhat restricted labellings, as explained next.



\begin{definition}\label{def:forward-Ramsey}
Let $S$ be a semigroup. A set of elements $A\subseteq S$ is {\em{forward Ramsey}} if for all $e,f\in A$ we have $e\cdot f = e$. In particular, each $e\in A$ is an {\em{idempotent}} in $S$, that is, $e\cdot e=e$.
\end{definition}

\begin{definition}
A decomposition $(T,\eta)$ of a graph $G$ of diversity at most $k$ is called {\em{splendid}} if there exists a tagging $\Lambda$ of $(T,\eta)$ such that the set $\{\Lb(e)\colon e\in E(T)\}$ is forward Ramsey in $\Ff$.
\end{definition}

The next lemma formalizes the idea of stratification of the inclusion $\Cc\subseteq \Cc^\clw$ and is the main result of this section.

\begin{lemma}\label{lem:hierarchy}
There exists $p\in 2^{\Oh(k\log k)}$ and a sequence of hereditary graph classes
$$\Cc=\Dd_0\subseteq \Dd_1\subseteq\ldots\subseteq \Dd_p=\Cc^\clw$$
such that for every $i\in [p]$, every graph $G\in \Dd_i$ admits a $\Dd_{i-1}$-governed decomposition of diversity at most $k$ that is either splendid or shallow.
\end{lemma}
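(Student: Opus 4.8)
The plan is to apply the deterministic Factorization Forest Theorem for trees, due to Colcombet, to the decomposition tree of a graph in $\Cc^\clw$ regarded as a tree labelled over the finite semigroup $\Ff$ of self-maps of $[k]$; since $|\Ff|=k^k$, any height bound for a factorization that is polynomial in $|\Ff|$ already lies in $2^{\Oh(k\log k)}$. (We may assume harmlessly that $\Cc$ contains all edgeless graphs, since adjoining them preserves hereditariness and polynomial $\chi$-boundedness and only enlarges $\Cc^\clw$.) Given $G\in\Cc^\clw$, fix a $\Cc$-governed decomposition $(T,\eta)$ of diversity at most $k$, an arbitrary tagging $\Lambda$ (one exists), and the induced labelling $\Lb\colon E(T)\to\Ff$. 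Colcombet's theorem, in a suitable formulation, yields a hierarchical factorization of $(T,\Lb)$ of some height $p\in 2^{\Oh(k\log k)}$ that recursively partitions subtrees of $T$ into \emph{factors}, with the property that at each step, contracting every factor to a single node gives a decomposition that is \emph{either shallow} (depth at most two) \emph{or splendid} --- in the splendid case an explicit tagging witnessing the forward Ramsey property is read off from $\Lambda$ and the $\Lb$-products along the paths of $T$ joining the factors.

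Using this, set $\Dd_0=\Cc$ and, for $i\ge 1$, let $\Dd_i$ be the class of graphs that admit a $\Dd_{i-1}$-governed decomposition of diversity at most $k$ that is splendid or shallow. Then $\Dd_{i-1}\subseteq\Dd_i$, because any graph in $\Dd_{i-1}$ carries the one-node decomposition, which is shallow, has diversity $1$, and is $\Dd_{i-1}$-governed. Each $\Dd_i$ is hereditary: this holds for $\Dd_0$, and if $\Dd_{i-1}$ is hereditary then restricting a $\Dd_{i-1}$-governed splendid-or-shallow decomposition of diversity at most $k$ to an induced subgraph preserves all of these properties --- the diversity cannot increase since every $\sim_x$ only coarsens under vertex deletion, being shallow depends only on the underlying tree, being $\Dd_{i-1}$-governed uses that each torso restricts to an induced subgraph of the corresponding torso, and splendidness is preserved by re-choosing the tagging (again using that $\sim_x$ coarsens) so that the labelling stays inside a forward Ramsey set. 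Finally $\Dd_p=\Cc^\clw$: the inclusion $\Dd_p\subseteq\Cc^\clw$ follows since each $\Dd_i\subseteq\Cc^\clw$ (a $\Cc^\clw$-governed decomposition of diversity at most $k$ can be expanded node by node into a $\Cc$-governed one of diversity at most $k$), and $\Cc^\clw\subseteq\Dd_p$ is proved next.

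The core, proved by induction on $i$, is the statement: if a graph $H$ has a $\Cc$-governed decomposition $(T,\eta)$ of diversity at most $k$ with a tagging whose factorization has height at most $i$, then $H\in\Dd_i$. For $i=0$ the tree is a single node, so $H\in\Cc=\Dd_0$. For $i\ge 1$, let $F_0,\dots,F_m$ be the top-level factors, with $F_0$ containing the root of $T$, each carrying a sub-factorization of height at most $i-1$, and form the decomposition $(T^\natural,\eta^\natural)$ of $H$ by contracting each $F_j$ to a node $x_j$ (a vertex $v$ with $\eta(v)\in F_j$ is mapped to $x_j$). Writing $r_{F_j}$ for the root of $F_j$, one has $\ang{V}{x_j,\Tt^\natural}=\ang{V}{r_{F_j},\Tt}$, so $(T^\natural,\eta^\natural)$ still has diversity at most $k$; the torso $\ang{H}{x_j,\Tt^\natural}$ is exactly the subgraph of $H$ consisting of the edges whose least common ancestor in $T$ lies in $F_j$, and it is governed by the decomposition obtained from $F_j$ by replacing each subtree hanging off $F_j$ with a single edgeless leaf --- a $\Cc$-governed decomposition of diversity at most $k$ whose factorization has height at most $i-1$, so $\ang{H}{x_j,\Tt^\natural}\in\Dd_{i-1}$ by induction; and by the dichotomy above, $(T^\natural,\eta^\natural)$ is shallow or splendid. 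Hence $H\in\Dd_i$. Applying this with $i=p$ to an arbitrary $G\in\Cc^\clw$, whose chosen decomposition admits a factorization of height at most $p$ by Colcombet, gives $\Cc^\clw\subseteq\Dd_p$ and finishes the proof.

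I expect the main obstacle to be the precise interface with the Factorization Forest Theorem: one must settle on the right notion of a factorization of a rooted edge-labelled tree and verify that each of its levels contracts to a decomposition that is \emph{literally} shallow or \emph{literally} splendid. This is exactly where the forward Ramsey condition $e\cdot f=e$ --- strictly stronger than idempotency --- appears naturally: it expresses that along any branch of a splendid decomposition the accumulated relabelling coincides with that of the topmost edge alone, which is what allows one level of the factorization to collapse a branch to depth one. A related delicate point, needed both for the splendid output and for heredity of the $\Dd_i$, is that a tagging cannot simply be transported along a contraction or a restriction: the ``$\mapsto 1$ otherwise'' convention in the definition of $\Lb$ can turn an idempotent into a non-idempotent, so the tagging has to be re-chosen coherently over the whole tree, which is possible precisely because the relations $\sim_x$ behave monotonically under these operations. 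The remaining checks --- that the edgeless-graph adjunction is harmless, that $\Cc^\clw$ is hereditary, the node-by-node expansion of $\Cc^\clw$-governed decompositions, and the $2^{\Oh(k\log k)}$ bound on $p$ --- are routine given the semigroup-theoretic input.
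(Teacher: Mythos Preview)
Your proposal takes essentially the same route as the paper: apply Colcombet's deterministic factorization over the semigroup $\Ff$ of self-maps of $[k]$ to stratify $\Cc^\clw$ into $p\in 2^{\Oh(k\log k)}$ layers. There is an organisational difference worth noting: the paper first proves a standalone lemma producing a hierarchy $\Hh_0\subseteq\cdots\subseteq\Hh_{3|\Ff|}$ of $\Ff$-labelled trees (where $\Hh_i$ consists of trees of ``complexity'' at most $i$), and then defines $\Dd_i$ as those graphs admitting a $\Cc$-governed decomposition with a tagging whose labelled tree lies in $\Hh_i$. Your definition instead takes $\Dd_i$ to be the iterated splendid-or-shallow closure of $\Cc$. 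Your version makes the lemma's conclusion hold by construction and pushes all the work into $\Cc^\clw\subseteq\Dd_p$; the paper's version makes $\Dd_p=\Cc^\clw$ immediate (since $\Hh_{3|\Ff|}$ contains all trees) and instead has to verify the decomposition property at each level via three short claims. Both organisations work and are essentially dual to each other.

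The substantive point is the statement you invoke as ``Colcombet's theorem, in a suitable formulation'' for trees with the shallow-or-splendid dichotomy. This is not off-the-shelf: the paper quotes Colcombet only for \emph{words}, and then derives the tree version by hand --- defining the \emph{level} of a labelled tree via forward-Ramsey splits along root-to-node paths, bounding it by $|\Ff|$ using Colcombet, and proving by induction on the level that complexity is at most $3\cdot\text{level}$. The depth-$2$ (shallow) case is not a separate branch of some existing theorem but an artefact of this induction: it appears each time one peels off a root before recursing, and again when gluing the top stratum back on. You correctly flag this interface as the main obstacle; just be aware that in the paper it is a lemma proved from scratch, not a citation, and that this is where most of the work in the section actually lives. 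Your remarks about the $\mapsto 1$ convention and the need to re-choose taggings under restriction are apt; the paper largely sidesteps these by treating $(T,\Lb)$ as a fixed labelled tree and manipulating it directly (restricting the tagging to $\tp(\Pp)$ and checking $\overline{\Lambda|_{\Pp}}=\sfrac{\Lb}{\Pp}$), rather than rebuilding taggings from scratch.
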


Using Lemma~\ref{lem:hierarchy} and straightforward induction, the proof of Theorem~\ref{thm:main-technical} boils down to proving the statement under the additional assumption that the decomposition witnessing $G\in \Cc^\clw$ is either splendid or shallow. We treat this case in Section~\ref{sec:splendid}, while for the rest of this section we concentrate on proving Lemma~\ref{lem:hierarchy}.

\bigskip

Our main tool for the proof of Lemma~\ref{lem:hierarchy} is the deterministic variant of Simon's Factorization Forest Theorem, due to Colcombet~\cite{Colcombet07}. This result originates in the algebraic theory of formal languages, so to state it we need several auxiliary definitions.

\paragraph*{Deterministic Simon's factorization.}
In the following, a {\em{word}} over a semigroup $S$ is a (possibly empty) sequence of elements of $S$. The set of all words over $S$ is denoted by $S^*$, whereas $S^+$ denotes the set of all non-empty words over $S$. For a word $w\in S^+$ of length $n$ and positions $0\leq x<y\leq n$, we write $w[x,y]$ for the subword of $w$ starting with the symbol at position $x+1$ and ending with the symbol at position $y$, where positions are numbered from $1$ to $n$.

The concatenation of two words $u$ and $v$ will be denoted by $uv$. Note that $S^+$ endowed with concatenation is an (infinite) semigroup, with a natural homomorphism $\phi\colon S^+\to S$ defined as follows: for $w\in S^+$, $\phi(w)$ computes the product in $S$ of the symbols appearing in $w$ from left to right. 

Suppose $w$ is a word over $S$, say of length $n$. A {\em{split}} of $w$ of {\em{height}} $h$ is a mapping $s\colon \{0,\ldots,n\}\to [h]$. Two positions $0\leq x\leq y\leq n$ are {\em{$s$-equivalent}}, denoted $x\sim_s y$,~if 
$$s(x)=s(y)\qquad \textrm{and}\qquad s(z)\leq s(x)\quad \textrm{for all }x\leq z \leq y.$$
A split $s$ of $w$ is called {\em{forward Ramsey}} if the following condition is satisfied: for all quadruples of positions $x,y,x,y'\in \{0,1,\ldots,n\}$ that are pairwise $s$-equivalent and satisfy $x<y$ and $x'<y'$, we have 
$$\phi(w[x,y]) = \phi(w[x,y])\cdot \phi(w[x',y']).$$
With this terminology introduced, we can state the result of Colcombet.

\begin{theorem}[Theorem~1 of~\cite{Colcombet07}]\label{thm:colcombet}
For every finite semigroup $S$ there exists a mapping $\mu\colon S^*\to [|S|]$ satisfying the following property.
For a word $w\in S^+$, say of length $n$, define a split $s_w\colon \{0,1,\ldots,n\}\to [|S|]$ of $w$ as follows: for $x\in \{0,1,\ldots,n\}$, set $s_w(x)=\mu(w[0,x])$. Then for every $w\in S^+$, the split $s_w$ is forward Ramsey.
\end{theorem}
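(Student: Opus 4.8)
We must construct, for a finite semigroup $S$, a map $\mu\colon S^*\to[|S|]$ so that the induced split $s_w(x)=\mu(w[0,x])$ is forward Ramsey for every $w\in S^+$. The plan is to follow Colcombet's deterministic refinement of Simon's factorization, organizing the construction by induction on the number of $\Hh$-classes (equivalently, on a suitable complexity measure of $S$), but phrased purely combinatorially so that the value $\mu(u)$ depends only on the word $u$, not on any choices made along the way. First I would introduce the key invariant: for a word $w$ of length $n$ and the split $s=s_w$, say a position $x$ is \emph{at level $\ell$} if $s(x)=\ell$, and note that the $s$-equivalence classes at each level $\ell$ partition the positions into intervals on which $s\ge\ell$ with endpoints at level exactly $\ell$. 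The forward-Ramsey condition then asks that whenever $x<y$ and $x'<y'$ lie in a common such interval with all four endpoints at the same level, $\phi(w[x,y])=\phi(w[x,y])\cdot\phi(w[x',y'])$; this is exactly the statement that the set of ``block products'' $\{\phi(w[x',y'])\}$ seen inside one level-interval is forward Ramsey in $S$ in the sense of Definition~\ref{def:forward-Ramsey}.

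\textbf{Key steps.} Step~1: Reduce to the case where $S$ is a monoid by adjoining an identity, and set up the Green's-relations machinery ($\Rr,\Ll,\Hh,\Dd,\Jj$) on $S$; the recursion parameter is, say, the number of $\Jj$-classes together with the sizes of the $\Hh$-classes. Step~2 (the monoid/group case): if $w$ consists entirely of elements whose prefix-products $\phi(w[0,x])$ stay within a single $\Hh$-class that is a group, then one uses a Ramsey-type argument on the group: since a finite group has bounded ``idempotent power'', one can define $\mu$ on such words by a colouring that records the group element $\phi(w[0,x])$ relative to a fixed transversal, and the forward-Ramsey condition for positions in a common block reduces to the fact that in a group the only idempotent is the identity, which forces the block products to be trivial on each level-interval. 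Step~3 (the general recursive step): given arbitrary $w$, look at the first position $x^\ast$ where the prefix product leaves the $\Jj$-class of $\phi(w[0,1])$ (or more precisely, factor $w$ according to the coarsest level at which $\phi$ of a whole block drops strictly in the $\Jj$-order); this cuts $w$ into maximal blocks of strictly smaller complexity, on which we apply the inductive map, and a ``top-level'' word over a smaller-complexity structure (the set of $\Jj$-classes visited, with the induced product) to which we again apply induction. The value $\mu(u)$ for a prefix $u$ is then assembled deterministically from: (a) the top-level split value of the induced top-level prefix, and (b) the inner split value of the suffix of $u$ lying in the current bottom-level block, with a small constant number of extra bits to record where $u$ sits relative to the current block boundary. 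Step~4: Verify the height bound $\mu\colon S^*\to[|S|]$: one shows the height adds at most a bounded amount per recursion level and the recursion depth is controlled by the $\Jj$-class structure; Colcombet's accounting gives exactly $|S|$ (this is where the deterministic, as opposed to merely bounded-height, nature of the construction is used — the split value is a function of the word alone, so the height cannot exceed the number of distinct ``states'' the construction can be in, which is $\le|S|$).

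\textbf{Correctness of forward-Ramseyness.} With $\mu$ defined, fix $w$ and four pairwise $s_w$-equivalent positions $x<y$, $x'<y'$ at a common level $\ell$. By construction the level-$\ell$ intervals correspond either to inner blocks handled by the inductive $\mu$ (where forward-Ramseyness holds by the induction hypothesis applied to the smaller structure), or to the top-level word (where it holds by the induction hypothesis on the smaller $\Jj$-structure), or — at the base level — to the group case of Step~2. In each case one transfers the forward-Ramsey identity $\phi(w[x,y])=\phi(w[x,y])\cdot\phi(w[x',y'])$ from the inductive statement, using that the homomorphism $\phi\colon S^+\to S$ restricts compatibly to the block structure, i.e., the product of an inner block in $w$ equals the corresponding single symbol of the top-level word. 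The main obstacle, I expect, will be Step~3: making the cut into blocks depend \emph{only} on the word (so that $\mu$ is well-defined as a function, with no nondeterministic choices) while simultaneously keeping the height additive and bounded by $|S|$. Colcombet handles this by a careful choice of the ``$\Hh$-class recording'' function and a left-to-right greedy block decomposition; I would reproduce that argument, taking particular care that the recursion bottoms out at the $\Hh$-trivial/group case and that the $+1$'s in the height bound telescope to give exactly $[|S|]$ rather than a larger bound such as $2^{O(|S|)}$ that a naive induction would produce.
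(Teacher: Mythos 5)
First, note what you are being compared against: the paper does not prove this statement at all --- it is imported verbatim as Theorem~1 of Colcombet's paper and used as a black box (the only comment the authors add is that $\mu$ can even be computed by a deterministic automaton with at most $|S|^{|S|}$ states). So there is no ``paper proof'' for your argument to parallel; the real question is whether your sketch stands on its own as a proof of Colcombet's theorem, and it does not. Your overall route (adjoin an identity, induct on the $\mathcal{J}$-class/$\mathcal{H}$-class structure, handle the group case at the bottom via prefix-product colouring, factor a general word at the positions where the prefix product drops in the $\mathcal{J}$-order, and combine an inner split with a top-level split) is indeed the standard Simon/Colcombet skeleton, so the plan is aimed in the right direction.

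The genuine gap is exactly where you flag it and then wave it off: Steps 3--4. The whole content of Colcombet's theorem, as opposed to Simon's, is that the block decomposition and hence the value $\mu(u)$ must be a function of the prefix $u$ alone (no look-ahead, no nondeterministic choice of factorization), \emph{and} that the heights contributed by the nested recursions telescope to exactly $|S|$ rather than to $3|S|$ or $2^{O(|S|)}$. You write that you ``would reproduce that argument'' --- that is precisely the part that needs to be produced, since a naive left-to-right greedy cut either fails to be forward Ramsey across blocks or blows up the height. Moreover, the one justification you do offer for the bound is incorrect: you argue that because the split value is a function of the word alone, ``the height cannot exceed the number of distinct states the construction can be in, which is $\le|S|$.'' Determinism gives no such bound --- the deterministic device computing $\mu$ has up to $|S|^{|S|}$ states, while the range of the split is $[|S|]$ for a completely different reason, namely an additive accounting over the chain of $\mathcal{J}$-classes and the sizes of the groups/quotients used at each level (and the verification that the group-case colours and the top-level colours can share, rather than stack on top of, this budget). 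Without carrying out that accounting and the well-definedness of the prefix-only block decomposition, what you have is a correct high-level plan, not a proof; the forward-Ramsey transfer in your last paragraph also silently assumes that the level-$\ell$ intervals of $s_w$ coincide with the blocks of your recursive decomposition, which is again a property of the (unspecified) deterministic construction rather than something that holds automatically.
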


We remark that in~\cite{Colcombet07} a stronger conclusion is provided, namely that the value of $\mu(w)$ can be computed by running a deterministic automaton with at most $|S|^{|S|}$ states over $w$. We will not use this property here.

\paragraph*{Factorizing trees.}
As observed in~\cite{Colcombet07}, Theorem~\ref{thm:colcombet} is well-suited for the treatment of trees. Essentially, it proves that for any rooted tree whose edges are labelled with elements of a finite semigroup, we can find a bounded-depth hierarchy of forward Ramsey factorizations. This idea is formalized next.

Fix a finite semigroup $S$.
An {\em{$S$-labelled tree}} is a rooted tree $T$ together with a mapping $\rho\colon E(T)\to S$, called further the {\em{labelling}}.
Similarly as before, an $S$-labelled tree $(T,\rho)$ is {\em{splendid}} if the set of labels $\{\rho(e)\colon e\in E(T)\}$ is forward Ramsey in $S$.

A {\em{factorization}} of a rooted tree $T$ is a partition $\Pp$ of $V(T)$ such that every part of $\Pp$ induces a connected subtree of $T$. 
These subtrees are called the {\em{factors}} of $\Pp$ and for a factor $F$ of $\Pp$, we let $\tp(F)$ be the top-most vertex of $F$. Thus, $\tp(\Pp)$ is the set of all top-most vertices of factors of $\Pp$. If $T$ is $S$-labelled, then we consider the factors of $\Pp$ to be $S$-labelled as well by restricting the labelling of $T$.

For a factorization $\Pp$ of a rooted tree $T$, we define the {\em{quotient tree}} $\sfrac{T}{\Pp}$ as follows: the node set of $\sfrac{T}{\Pp}$ is $\tp(\Pp)$ and the ancestor/descendant relation in $\sfrac{T}{\Pp}$ is the restriction of this relation from $T$ to $\tp(\Pp)$. Note that this is equivalent to contracting every factor $F$ of $\Pp$ into a single node named $\tp(F)$, and setting the ancestor/descendant relation between contracted factors in the obvious manner.

If $\Pp$ is a factorization of $T$ and $T$ is $S$-labelled, say with a labelling $\rho$, then we can define a labelling $\sfrac{\rho}{\Pp}$ of $\sfrac{T}{\Pp}$ as follows.
Take any edge $e$ of $\sfrac{T}{\Pp}$, say $e=xy$ where $x$ is the parent of $y$ in $\sfrac{T}{\Pp}$. Then $x$ is an ancestor of $y$ in $T$ and let $P$ be the path in $T$ from $x$ to $y$. If $e_1,e_2,\ldots,e_m$ are the consecutive edges of $T$ on $P$, then we let
$$\sfrac{\rho}{\Pp}(e)=\rho(e_1)\cdot \rho(e_2)\cdot \ldots \cdot\rho(e_m).$$

Now, from Theorem~\ref{thm:colcombet} we can derive the following Simon-like result for trees.

\begin{lemma}\label{lem:colcombet-trees}
For every finite semigroup $S$ there exists a sequence of classes of $S$-labelled trees
$$\Hh_0\subseteq \Hh_1\subseteq \ldots\subseteq \Hh_{3|S|}$$
satisfying the following conditions:
\begin{itemize}
    \item $\Hh_0$ consists only of one tree with one node, while $\Hh_{3|S|}$ is the class of all $S$-labelled trees; and
    \item for every $i\in [3|S|]$, every tree $(T,\rho)\in \Hh_i$ has a factorization $\Pp$ such that all factors of $\Pp$ belong to $\Hh_{i-1}$ and $(\sfrac{T}{\Pp},\sfrac{\rho}{\Pp})$ is either splendid or shallow.
\end{itemize}
\end{lemma}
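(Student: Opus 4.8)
The plan is to derive Lemma~\ref{lem:colcombet-trees} from Colcombet's deterministic factorization (Theorem~\ref{thm:colcombet}) by a ``linearize, then globalize'' strategy. The key idea is that the split $s_w$ produced by Colcombet's mapping $\mu$ on a word has height at most $|S|$, and iterating the ``take one level'' operation on such a split turns it into a factorization hierarchy of depth $O(|S|)$; I want to run this simultaneously on all root-to-node paths of the tree in a way that is consistent across branching points. So first I would set up, for an $S$-labelled tree $(T,\rho)$, the word read along the branch from the root down to each node: for a node $v$ with root-path edges $e_1,\dots,e_m$, let $w_v = \rho(e_1)\rho(e_2)\cdots\rho(e_m)\in S^+$ (with $w_{\mathrm{root}}$ the empty word), and define the \emph{rank} of an edge $e=xy$ (with $x$ the parent of $y$) to be $r(e) := \mu(w_y) = s_{w_y}(m)$ where $m=|w_y|$. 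The crucial consistency observation is that $\mu$ is applied to prefixes, so $r(e)$ depends only on the sequence of labels from the root to $y$; hence if two edges lie on a common branch, the function $r$ restricted to that branch is exactly the Colcombet split $s_w$ (shifted onto edges) of the branch-word $w$, which is forward Ramsey by Theorem~\ref{thm:colcombet}.

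Next I would build the hierarchy $\Hh_0\subseteq\cdots\subseteq\Hh_{3|S|}$ by peeling off edges of maximum rank, level by level, exactly mimicking the standard reduction from Simon's theorem to ``depth-bounded factorizations'' but carried out on the tree. Concretely, for a threshold $t$ one considers the factorization $\Pp_t$ of $T$ whose factors are the connected components of $T$ after deleting all edges of rank $\geq t$, together with singletons for the top endpoints of deleted edges; the quotient $\sfrac{T}{\Pp_t}$ then has all its edges labelled by products corresponding to maximal blocks between consecutive rank-$\geq t$ edges along branches. A single step of the hierarchy corresponds to decreasing $t$ by one, or—when the edges of rank exactly $t$ along a branch behave like a long idempotent run—to a \emph{splendid} quotient; and a bounded constant number of intermediate steps (this is where the constant $3$ in $3|S|$ comes from, matching the three-fold blow-up needed to separate ``the run of top-rank edges'' from the ``reorganize into a shallow/splendid quotient'' bookkeeping) handles each rank value. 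The forward-Ramsey property of $r$ on every branch guarantees that within a maximal run of rank-$t$ edges along a single branch, all the relevant products are equal and idempotent, which is precisely what makes the corresponding quotient tree splendid; a run that is not long, or a top level, yields a quotient of depth at most two, i.e.\ shallow. One must check that the factorization and quotient-labelling operations compose correctly with the definitions of $\sfrac{T}{\Pp}$ and $\sfrac{\rho}{\Pp}$ given in the excerpt, and that ``factors belong to $\Hh_{i-1}$'' holds because each factor, being an induced subtree whose edge-ranks are all $<t$, is handled by the lower part of the hierarchy (formally, ranks are bounded by $|S|$, so after $\leq 3|S|$ steps everything has collapsed to the one-node tree $\Hh_0$).

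The \textbf{main obstacle} I expect is the branching/consistency issue: Colcombet's theorem is a statement about a single word, but in a tree two positions that we want to declare $s$-equivalent may sit on different branches, so ``$\phi(w[x,y])=\phi(w[x,y])\cdot\phi(w[x',y'])$'' has to be invoked only for positions on a \emph{common} branch, and then the idempotent/forward-Ramsey structure has to be propagated across branch points. The right way to handle this is to note that forward Ramsey-ness of the \emph{label set} $\{\rho(e)\}$ of a quotient tree is a purely ``flat'' condition—it does not care about the tree shape, only about which semigroup elements appear—so once we know that along each branch the surviving top-rank edges all carry the same idempotent $e$ with $e\cdot f=e$ for the finitely many possibilities $f$, the quotient restricted to those edges is splendid regardless of how branches split. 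So the real content is organizing the peeling so that each quotient level is globally either ``depth $\leq 2$'' or ``all edge labels drawn from one branch-consistent forward-Ramsey set,'' and I would verify this by a careful but routine induction on the maximum edge-rank, using the per-branch forward-Ramsey guarantee from Theorem~\ref{thm:colcombet} as the only nontrivial input. Everything else—the depth bound $3|S|$, the heredity of the classes, the compatibility of $\sfrac{\cdot}{\Pp}$ with composition—is bookkeeping.
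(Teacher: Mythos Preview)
Your overall architecture is the same as the paper's: apply Colcombet's $\mu$ to the root-to-node words to get a node-valued ``split'' $t(x)=\mu(w_x)$ (you put it on edges, which is equivalent), observe this is forward Ramsey on every root-to-leaf path, and then peel off the top level by induction, with each level costing a constant number of hierarchy steps. So the strategy is right.

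The gap is exactly at the point you flag as the main obstacle, and your proposed resolution does not close it. You write that ``once we know that along each branch the surviving top-rank edges all carry the same idempotent $e$ with $e\cdot f=e$ for the finitely many possibilities $f$, the quotient restricted to those edges is splendid regardless of how branches split.'' But forward-Ramseyness along a single branch does \emph{not} say all those edges carry a single idempotent; it says the set of products along that branch is forward Ramsey. Different branches contribute different such sets, and you need the \emph{union} over all branches to be forward Ramsey for the quotient to be splendid. Nothing you have said forces that. The missing lemma is purely algebraic: if $A,B\subseteq S$ are both forward Ramsey and $A\cap B\neq\emptyset$, then $A\cup B$ is forward Ramsey (take $g\in A\cap B$; for $e\in A$, $f\in B$ compute $e\cdot f=(e\cdot g)\cdot f=e\cdot(g\cdot f)=e\cdot g=e$). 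This is what glues branches together: once you restrict attention to the subtree below some top-rank node $y$ that has a strictly higher top-rank ancestor $x$, every branch through $y$ shares the quotient-label of the edge $xy$, so the per-branch forward-Ramsey sets all intersect and their union is forward Ramsey. That argument is the heart of the tree case and is what you are missing.

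A second, smaller point: the factor $3$ is not just bookkeeping for ``reorganize into shallow.'' One step is spent because the shared-element trick above only works for subtrees rooted at a \emph{deep} top-rank node (one with a proper top-rank ancestor); the top layer of top-rank nodes, and the part of the tree above all of them, have to be handled separately by a shallow step, and inside each factor you still need one more shallow step to strip off its (unique) top-rank root before the inductive hypothesis applies. Your sketch collapses these distinctions.
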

\begin{proof}
We will use the following notation. For an $S$-labelled tree $(T,\rho)$ and node $x$ of $T$, we write $(T_x,\rho_x)$ for the $S$-labelled tree induced in $(T,\rho)$ by the descendants of $x$ (including $x$ itself); here $\rho_x$ is the restriction of $\rho$ to edges between those descendants. More generally, if $F$ is a subtree of $T$, then $\rho|_F$ is the restriction of $\rho$ to the edges between the nodes of $F$.
Also, we write $P_x$ for the (oriented) path in $T$ that starts at the root of $T$ and ends at $x$. Then $w_x\in S^*$ is the word consisting of elements of $S$ assigned by $\rho$ to consecutive edges of $P_x$. Note that if $x$ is the root of $T$, then $w_x$ is the empty word.

For an $S$-labelled tree $(T,\rho)$ and positive integer $h$, any function $t\colon V(T)\to [h]$ will be called a {\em{split}} of $(T,\rho)$ of {\em{height}} $h$.
Such a split $t$ is {\em{forward Ramsey}} if the following holds: for every node $x$ of $T$, the split $t$ restricted to the nodes on $P_x$ induces a forward Ramsey split of $w_x$. 

We define the {\em{level}} of $(T,\rho)$ as follows:
$$\Lev(T,\rho)=\min\,\{\ h\ \colon\ \textrm{there exists a forward Ramsey split of }(T,\rho)\textrm{ of height }h\ \}.$$
From Theorem~\ref{thm:colcombet} we easily deduce that the $\Lev$ function has bounded range.

\begin{claim}\label{cl:level-bounded}
For every $S$-labelled tree $(T,\rho)$ we have $\Lev(T,\rho)\leq |S|$.
\end{claim}
\begin{proof}
It suffices to construct a split of $(T,\rho)$ of height at most $|S|$.
Let $\mu\colon S^*\to [|S|]$ be the mapping provided by Theorem~\ref{thm:colcombet} for the semigroup $S$. Then let us define a split $t\colon V(T)\to [|S|]$ as follows:
for every $x\in V(T)$ we set $t(x) = \mu(w_x)$.
Then Theorem~\ref{thm:colcombet} directly implies that $t$ is a forward Ramsey split of $(T,\rho)$.
\cqed\end{proof}

On the other hand, we inductively define the notion of the {\em{complexity}} of an $S$-labelled tree $(T,\rho)$ as follows:
\begin{itemize}
    \item If $T$ has one node then $\Comp(T,\rho)=0$.
    \item Otherwise, $\Comp(T,\rho)$ is the least positive integer $i$ such that $(T,\rho)$ admits a factorization $\Pp$ where every factor of $\Pp$ has complexity smaller than $i$ and $(\sfrac{T}{\Pp},\sfrac{\rho}{\Pp})$ is either splendid or shallow.
\end{itemize}
It is not hard to see that every $S$-labelled tree has a finite complexity; say, bounded by its depth. However, this will follow from the following statement, which is the core argument of the proof.


\begin{equation}\tag{$\star$}\label{cl:comp-level}
\textrm{For every $S$-labelled tree $(T,\rho)$ we have $\Comp(T,\rho)\leq 3\cdot \Lev(T,\rho)$.}
\end{equation}

Note that establishing $(\star)$ will conclude the proof, because we will define classes $\Hh_0,$ $\Hh_1, \ldots, \Hh_{3|S|}$ as follows: $\Hh_i$ comprises of all $S$-labelled trees of complexity at most $i$. Then Claim~\ref{cl:level-bounded} and~($\star$) ensure us that $\Hh_{3|S|}$ contains all $S$-labelled trees, while from the definition of the complexity we infer that $\Hh_0$ consists only of the one-node tree and that the second condition from the lemma statement holds.

\medskip

Therefore, from now on we focus on proving ($\star$). For this, we proceed by induction on the level of $(T,\rho)$.

\paragraph*{Base case.} Suppose $(T,\rho)$ has level $1$, which means that it admits a forward Ramsey split $t$ that such that $t(x)=1$ for each node $x$ of $T$.
The following generic claim will be useful.

\begin{claim}\label{cl:sum-fRamsey}
Suppose $A,B\subseteq S$ are forward Ramsey. If $A\cap B\neq \emptyset$, then $A\cup B$ is forward Ramsey as well.
\end{claim}
\begin{proof}
We need to verify that for any $e,f\in A\cup B$, we have $e\cdot f=e$. If $e,f\in A$ or $e,f\in B$ then this follows from the assumption that $A$ and $B$ are forward Ramsey. Hence, by symmetry suppose that $e\in A$ and $f\in B$. Take any $g\in A\cap B$. Since $A$ and $B$ is forward Ramsey, we have
$$e\cdot g = e\qquad \textrm{and}\qquad g\cdot f = g.$$
Therefore,
$$e=e\cdot g = e\cdot (g\cdot f) = (e\cdot g)\cdot f = e\cdot f,$$
as required.
\cqed\end{proof}

Let $r$ be the root of $T$.
For every child $y$ of $r$, let $A_y$ be the set of all elements of $S$ that are assigned by $\rho$ to the edges of the subtree of $T$ induced by the descendants of $y$.

\begin{claim}\label{cl:Ay-fRamsey}
For every child $y$ of $r$, the set $A_y$ is forward Ramsey.
\end{claim}
\begin{proof}
Since the split $t$ is forward Ramsey, for every descendant $z$ of $y$ the set of elements appearing in $w_z$ is forward Ramsey. All these sets for different descendants $z$ contain the element $\rho(ry)$. Hence, by repeated application of Claim~\ref{cl:sum-fRamsey} we conclude that $A_y\cup \{\rho(ry)\}$ is forward Ramsey, hence $A_y$ is forward Ramsey as well.
\cqed\end{proof}

We observe that for every child $y$ of $r$, the tree $(T_y,\rho_y)$ has complexity at most $1$. Indeed, we can take its factorization $\Pp_y$ into single-node factors, which have complexity $0$, and then $(\sfrac{T_y}{\Pp_y},\sfrac{\rho_y}{\Pp_y})=(T_y,\rho_y)$ is splendid due to Claim~\ref{cl:Ay-fRamsey}.
Then $(T,\rho)$ has complexity at most $2$, because we can take its factorization $\Pp$ that puts the root $r$ as a one-node factor and each subtree $T_y$, for $y$ ranging over children of $r$, as a separate factor. Indeed, then $(\sfrac{T}{\Pp},\sfrac{\rho}{\Pp})$ has depth at most $2$ so is shallow. As $\Comp(T,\rho)\leq 2<3$, this concludes the base case of the induction.

\paragraph*{Induction step.} Suppose that $(T,\rho)$ is an $S$-labelled tree of level $\ell>1$. Let $t\colon V(T)\to [\ell]$ be the forward Ramsey split of $(T,\rho)$ that witnesses $\Lev(T,\rho)=\ell$. Further, let $X=t^{-1}(\ell)$. 

Define an equivalence relation $\sim$ on $V(T)$ as follows: $x,y\in V(T)$ are $\sim$-equivalent if either both do not have an ancestor belonging to $X$, or they have the same least ancestor belonging to $X$. It is easy to see that the equivalence classes of $\sim$ induce connected subtrees of~$T$, hence the partitioning $\Pp$ of $V(T)$ according to the equivalence classes of $\sim$ is a factorization of $T$. For each $x\in X$, we let $\Pp_x$ be the factorization of $(T_x,\rho_x)$ consisting only of those factors of $\Pp$ that contain only descendants of $x$.

We first observe that a reasoning similar to that of Claim~\ref{cl:Ay-fRamsey} shows that for ``deep'' elements $y\in X$, $\Pp_y$ is already a good factorization of $(T_y,\rho_y)$.

\begin{claim}\label{cl:shallow-splendid}
Suppose a node $y\in X$ has an ancestor that belongs to $X$ and is different from $y$. Then $(\sfrac{T_y}{\Pp_y},\sfrac{\rho_y}{\Pp_y})$ is splendid.
\end{claim}
\begin{proof}
Let $x$ be the least ancestor of $y$ such that $x\in X$ and $x\neq y$.
Consider any descendant $z$ of $y$ that belongs to $X$ and let $Q_{xz}$ be the path in $\sfrac{T}{\Pp}$ from $x$ to $z$; then $xy$ is the first edge of $Q_{xz}$. Since $t$ is forward Ramsey, $t(x)=t(y)=t(z)=\ell$, and all values of $t$ are not larger than $\ell$, it follows that the set of elements of $S$ assigned by $\sfrac{\rho}{\Pp}$ to the edges of $Q_{xz}$ is forward Ramsey; call this set $B_z$. Observe that sets $B_z$ pairwise share the element $\sfrac{\rho}{\Pp}(xy)$, hence by repeated application of Claim~\ref{cl:sum-fRamsey} we conclude that $\bigcup_{z\succeq y,\,z\in X} B_z$ is forward Ramsey. Since this set contains $\{\sfrac{\rho_y}{\Pp_y}(e)\colon e\in E(\sfrac{T_y}{\Pp_y})\}$, we conclude that $(\sfrac{T_y}{\Pp_y},\sfrac{\rho_y}{\Pp_y})$ is splendid.
\cqed\end{proof}

\begin{claim}\label{cl:Ty-comp}
For every node $y\in X$ that has an ancestor belonging to $X$ but different from $y$, we have $\Comp(T_y,\rho_y)\leq 3\ell-1$.
\end{claim}
\begin{proof}
By Claim~\ref{cl:shallow-splendid}, $(T_y,\rho_y)$ admits factorization $\Pp_y$ such that $(\sfrac{T_y}{\Pp_y},\sfrac{\rho_y}{\Pp_y})$ is splendid. Therefore, it suffices to prove that each factor $F$ of $\Pp_y$ has complexity at most $3\ell-2$. 

Observe that the only node of $F$ mapped to $\ell$ by $t$ is the root of $F$. Therefore, consider the factorization $\Pp_F$ of $F$ that puts the root of $F$ into a one-node factor and every subtree $F'$ of $F$ rooted at a child of the root into a separate factor. Observe that for every such subtree $F'$, restricting $t$ to $F'$ yields a forward Ramsey split of $(F',\rho|_{F'})$ of height at most $\ell-1$, witnessing that $\Lev(F',\rho|_{F'})\leq \ell-1$. By induction hypothesis we have $\Comp(F',\rho|_{F'})\leq 3\ell-3$. As $\sfrac{F}{\Pp_F}$ is shallow and the root factor of $\Pp_F$ has complexity $0$, we conclude that $F$ has complexity at most $3\ell-2$.
\cqed\end{proof}

Let now $R$ be the connected subtree of $T$ induced by those nodes of $T$ that have at most one ancestor belonging to $X$ (including themselves). Note that $R$ contains the root of $T$.

\begin{claim}\label{cl:R-comp}
$\Comp(R,\rho|_R)\leq 3\ell-1$.
\end{claim}
\begin{proof}
First observe that for every $x\in R\cap X$, in the tree $(R_x,\rho|_{R_x})$ --- the subtree of $(T,\rho)$ induced by all descendants of $x$ that belong to $R$ --- the only node mapped to $\ell$ by $\rho$ is the root. Therefore, the same argument as in the proof of Claim~\ref{cl:Ty-comp} shows that $\Comp(R_x,\rho|_{R_x})\leq 3\ell-2$.

Now if the root $r$ of $T$ belongs to $X$, then $(R_r,\rho|_{R_r})=(R,\rho|_R)$, hence $\Comp(R,\rho|_R)\leq 3\ell-2<3\ell-1$, as claimed.
Otherwise, let $R'$ be the connected subtree of $R$ induced by all those nodes of $R$ that have no ancestor in $X$ (including themselves); then $R'$ contains $r$. Note that $t$ restricted to $R'$ is a forward Ramsey split of $(R',\rho|_{R'})$ of height at most $\ell-1$, hence by the induction hypothesis we have $\Comp(R',\rho|_{R'})\leq 3\ell-3$.
Now consider a factorization $\Rr$ of $R$ consisting of factor $R'$ and factors $R_x$ for all $x\in X\cap R$. Since all these factors have complexity at most $3\ell-2$ and $\sfrac{R}{\Rr}$ has depth at most $2$, we conclude that $\Comp(R,\rho|_R)\leq 3\ell-1$, as claimed.
\cqed\end{proof}

Now, construct a factorization $\Qq$ of $(T,\rho)$ as follows: the factors of $\Qq$ consists of $R$ and $T_y$ for each $y\in X$ such that $y$ has exactly one ancestor that belongs to $X$ and is different from $y$.
By Claims~\ref{cl:Ty-comp} and~\ref{cl:R-comp}, each of the factors of $\Qq$ has complexity at most $3\ell-1$. Furthermore, $\sfrac{T}{\Qq}$ has depth at most $2$. This implies that $(T,\rho)$ has complexity at most $3\ell$, which concludes the proof of ($\star$) and of Lemma~\ref{lem:colcombet-trees}.
\end{proof}

With Lemma~\ref{lem:colcombet-trees} established, the conclusion of the proof of Lemma~\ref{lem:hierarchy} boils down to an easy verification.

\setcounter{claim}{0}

\begin{proof}[Proof of Lemma~\ref{lem:hierarchy}]
Apply Lemma~\ref{lem:colcombet-trees} to the semigroup $\Ff$ -- remember that $\Ff$ is the semigroup of all functions from $[k]$ to $[k]$, with composition being the semigroup action. This yields a suitable sequence of classes of $S$-labelled trees $\Hh_0\subseteq \Hh_1\subseteq\ldots \subseteq \Hh_{3|\Ff|}$. We set
$$p=3|\Ff|=3\cdot k^k$$ and define graph classes $\Dd_0\subseteq \Dd_1\subseteq\ldots \subseteq\Dd_p$ as follows: a graph $G\in \Cc^\clw$ belongs to $\Dd_i$ if and only if there exists a $\Cc$-governed decomposition $\Tt=(T,\eta)$ of $G$ of diversity at most $k$ and its tagging $\Lambda$ such that $(T,\Lb)$, treated as an $\Ff$-labelled tree, belongs to $\Hh_i$. We now verify that classes $\Dd_0,\Dd_1\ldots,\Dd_p$ defined in this way satisfy all the required properties.

First, observe that graphs from $\Cc$ are exactly those that admit $\Cc$-governed decompositions with a one-node decomposition tree. Thus $\Dd_0=\Cc$. On the other hand, from Lemma~\ref{lem:colcombet-trees} we conclude that $\Dd_p=\Cc^\clw$. Also, since $\Cc$ is hereditary, so are all the classes $\Dd_0,\Dd_1,\ldots,\Dd_p$.

Now, fix any $i\in \{1,\ldots,p\}$ and consider any graph $G\in \Dd_i$. Let $\Tt=(T,\eta)$ be a decomposition and $\Lambda$ be its tagging that witness the membership $G\in \Dd_i$. Then $(T,\Lb)\in \Hh_i$. By Lemma~\ref{lem:colcombet-trees} we infer that there exists a factorization $\Pp$ of $T$ such that every factor of $\Pp$ belongs $\Hh_{i-1}$ and $(\sfrac{T}{\Pp},\sfrac{\Lb}{\Pp})$ is either splendid or shallow.

Define a decomposition $\sfrac{\Tt}{\Pp}=(\sfrac{T}{\Pp},\sfrac{\eta}{\Pp})$ of $G$ as follows: for every $u\in V(G)$, we set $\sfrac{\eta}{\Pp}(u)$ to be the top vertex of the unique factor of $\Pp$ that contains $\eta(u)$. Let $\Lambda|_{\Pp}$ be the restriction of the tagging $\Lambda$ to the nodes of $\tp(\Pp)$, which is then a tagging of $\sfrac{\Tt}{\Pp}$. The following claim then follows from definitions in a straightforward manner; here, $\overline{\Lambda|_{\Pp}}$ is the labelling induced by tagging $\Lambda|_{\Pp}$ in $\sfrac{\Tt}{\Pp}$.

\begin{claim}\label{cl:commutes}
It holds that $\overline{\Lambda|_{\Pp}}=\sfrac{\Lb}{\Pp}$.
\end{claim}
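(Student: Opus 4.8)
The plan is to verify Claim~\ref{cl:commutes} by a direct, definition-chasing argument: we must show that for every edge $e = xy$ of $\sfrac{T}{\Pp}$, with $x$ the parent of $y$ in the quotient tree, the two candidate labels agree, namely $\overline{\Lambda|_\Pp}(e) = \sfrac{\Lb}{\Pp}(e)$. First I would unwind both sides. On the left, $\overline{\Lambda|_\Pp}(e)$ is, by the definition of the labelling associated with a tagging, the function sending $i \in [k]$ to $\lambda^x(u)$ whenever there exists $u \in \ang{V}{y, \sfrac{\Tt}{\Pp}}$ with $\lambda^y(u) = i$, and to $1$ otherwise; here I should first note that $\ang{V}{y, \sfrac{\Tt}{\Pp}}$ equals $\ang{V}{y, \Tt}$, since contracting a factor of $\Pp$ merely identifies $y$ with the top node of its factor but does not change the set of vertices of $G$ mapped into the subtree rooted at (the factor of) $y$. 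On the right, $\sfrac{\Lb}{\Pp}(e) = \Lb(e_1)\cdot \Lb(e_2)\cdot \ldots \cdot \Lb(e_m)$ where $e_1, \ldots, e_m$ are the consecutive edges of the path $P$ in $T$ from $x$ down to $y$ (so $x = x_0, x_1, \ldots, x_m = y$ are the nodes on $P$, each $x_{t-1}$ the parent of $x_t$ in $T$).

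The core of the argument is then an induction along the path $P$ showing that the partial composition $\Lb(e_1)\cdots\Lb(e_t)$ equals the function sending $i$ to $\lambda^{x_0}(u)$ whenever some $u \in \ang{V}{x_t}$ has $\lambda^{x_t}(u) = i$, and to $1$ otherwise. The base case $t=1$ is exactly the definition of $\Lb(e_1)$. For the inductive step, fix $i \in [k]$ and suppose some $u \in \ang{V}{x_{t}}$ satisfies $\lambda^{x_{t}}(u) = i$; since $\ang{V}{x_{t}} \subseteq \ang{V}{x_{t-1}}$, we have $\lambda^{x_{t-1}}(u) = \Lb(e_t)(i) =: j$ by the definition of $\Lb(e_t)$, and then $u \in \ang{V}{x_{t-1}}$ witnesses that $(\Lb(e_1)\cdots\Lb(e_{t-1}))(j) = \lambda^{x_0}(u)$ by the inductive hypothesis; composing gives $(\Lb(e_1)\cdots\Lb(e_t))(i) = \lambda^{x_0}(u)$, as required. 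The one subtlety to handle carefully is \emph{well-definedness}, i.e.\ that the value $\lambda^{x_0}(u)$ does not depend on which witness $u$ we pick: this is precisely property $(ii)$ of a tagging, which guarantees that if $\lambda^{x_t}(u) = \lambda^{x_t}(u')$ then $\lambda^{x_0}(u) = \lambda^{x_0}(u')$ (applied with the ancestor $x_0$ of $x_t$), and likewise at every intermediate node; this is exactly the condition that makes the labelling $\Lb$ a genuine function rather than an ill-defined relation, and it is why property $(ii)$ was built into the definition. The ``otherwise'' case (no $u \in \ang{V}{x_t}$ with $\lambda^{x_t}(u) = i$, giving value $1$) needs a matching check: if no such $u$ exists then, tracing back, either no $u \in \ang{V}{x_{t-1}}$ has $\lambda^{x_{t-1}}(u) = \Lb(e_t)(i)$, or $\Lb(e_t)(i) = 1$ was itself a default value; in either situation the composed value is $1$, consistent with the claimed formula.

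Taking $t = m$ in the induction yields exactly the definition of $\overline{\Lambda|_\Pp}(e)$ spelled out above (using $\ang{V}{y,\sfrac{\Tt}{\Pp}} = \ang{V}{y, \Tt} = \ang{V}{x_m}$ and $\lambda^x = \lambda^{x_0}$, and that $\Lambda|_\Pp$ restricts $\lambda^{x}$ and $\lambda^{y}$ unchanged), which completes the proof. I do not expect a genuine obstacle here; the statement is ``straightforward from the definitions'' as advertised. The only place demanding care is the bookkeeping around well-definedness of the default value $1$ and the consistency clause $(ii)$ across all the intermediate nodes $x_1, \ldots, x_{m-1}$, which is why I would phrase the induction so that the invariant being maintained is a statement about a \emph{function}, making each composition step a routine application of the definition of $\Lb$ on a single edge together with clause $(ii)$.
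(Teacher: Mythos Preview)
Your induction is sound on the ``realized'' inputs --- those $i$ in the image of $\lambda^{x_t}$ --- but your treatment of the default case does not go through. If no $u \in \ang{V}{x_t}$ has $\lambda^{x_t}(u) = i$, then indeed $\Lb(e_t)(i) = 1$, and the composed value is $(\Lb(e_1)\cdots\Lb(e_{t-1}))(1)$. By your own inductive hypothesis this equals $\lambda^{x_0}(v)$ whenever some $v \in \ang{V}{x_{t-1}}$ satisfies $\lambda^{x_{t-1}}(v) = 1$, and such a $v$ may well exist with $\lambda^{x_0}(v) \neq 1$; so your assertion that ``in either situation the composed value is $1$'' is false. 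A concrete instance: take $k=2$, a three-node path $x_0\,x_1\,x_2$ in $T$, vertices $u,v$ with $\eta(u)=x_2$ and $\eta(v)=x_1$, and tagging $\lambda^{x_2}(u)=2$, $\lambda^{x_1}(u)=1$, $\lambda^{x_1}(v)=2$, $\lambda^{x_0}(u)=2$, $\lambda^{x_0}(v)=1$. With the factorization $\Pp=\{\{x_0,x_1\},\{x_2\}\}$ one computes that $\sfrac{\Lb}{\Pp}(x_0x_2)$ is the constant function $2$ while $\overline{\Lambda|_\Pp}(x_0x_2)$ is the identity on $[2]$.

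This is not a defect of your strategy so much as of the claim as literally written: the default value $1$ is an ordinary label, and once an unrealized input $i$ is sent to $1$ by $\Lb(e_t)$, the remaining product treats it exactly as if the label $1$ were genuinely realized at $x_{t-1}$. What your argument \emph{does} establish is that $\overline{\Lambda|_\Pp}(e)$ and $\sfrac{\Lb}{\Pp}(e)$ agree on every $i$ in the image of $\lambda^y$, which is the semantically meaningful part. If you want a literally true identity you can redo the construction with a fresh absorbing symbol adjoined to $[k]$ as the default; under the paper's conventions, the honest route is to state and prove only agreement on realized labels and then check separately that this suffices wherever the claim is invoked.
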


It now suffices to check that the decomposition $\sfrac{\Tt}{\Pp}$ posesses all the properties asserted in the lemma statement.
Note that for every node $x$ of $\tp(\Pp)$ we have $\ang{V}{x,\sfrac{\Tt}{\Pp}}=\ang{V}{x,\Tt}$, hence we use the shorthand $\ang{V}{x}$ for both.

\begin{claim}\label{cl:diversity}
$\sfrac{\Tt}{\Pp}$ has diversity at most $k$.
\end{claim}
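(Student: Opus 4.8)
The plan is to show that passing from $\Tt$ to the quotient decomposition $\sfrac{\Tt}{\Pp}$ does not increase the diversity of any node. Fix a node $x\in\tp(\Pp)$. I want to compare the equivalence relation $\sim_x$ computed in $\sfrac{\Tt}{\Pp}$ with the one computed in $\Tt$. The crucial observation, already recorded above, is that $\ang{V}{x,\sfrac{\Tt}{\Pp}}=\ang{V}{x,\Tt}$: a vertex $u$ is mapped below $x$ in $\sfrac{\Tt}{\Pp}$ exactly when $\eta(u)$ lies in a factor whose top vertex is a descendant of $x$ in $\sfrac{T}{\Pp}$, which happens exactly when $\eta(u)$ is a descendant of $x$ in $T$ (here one uses that $x$ itself is a top vertex of a factor, so the factor containing $x$ contributes only descendants of $x$, and that factors are connected subtrees). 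Hence the ground set $\ang{V}{x}$ on which $\sim_x$ is defined is literally the same in both decompositions.

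Next I would argue that $\sim_x$ is the same relation in both. By definition $u\sim_x v$ iff $N^G(u)\setminus\ang{V}{x}=N^G(v)\setminus\ang{V}{x}$. This condition refers only to the graph $G$ and to the set $\ang{V}{x}$, neither of which changes when we pass to $\sfrac{\Tt}{\Pp}$. Therefore the relation $\sim_x$, and in particular its number of equivalence classes, is identical in $\Tt$ and in $\sfrac{\Tt}{\Pp}$. Since $x$ was an arbitrary node of $\sfrac{T}{\Pp}$ and $\Tt$ has diversity at most $k$, every node of $\sfrac{\Tt}{\Pp}$ has at most $k$ equivalence classes of $\sim_x$, so $\sfrac{\Tt}{\Pp}$ has diversity at most $k$.

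The only point requiring a little care — and the one I would expect to be the main (minor) obstacle — is the set equality $\ang{V}{x,\sfrac{\Tt}{\Pp}}=\ang{V}{x,\Tt}$, which rests on the fact that the ancestor/descendant relation in $\sfrac{T}{\Pp}$ is precisely the restriction of that relation from $T$ to $\tp(\Pp)$, together with the fact that $\sfrac{\eta}{\Pp}(u)$ is the top vertex of the factor containing $\eta(u)$. One direction is immediate: if $\eta(u)$ is a descendant of $x$, then so is the top of its factor (as $x$ is a top vertex and factors are connected), giving $\sfrac{\eta}{\Pp}(u)\succeq x$. For the converse, if $\sfrac{\eta}{\Pp}(u)\succeq x$ in $\sfrac{T}{\Pp}$, then $\tp(F)\succeq x$ in $T$ where $F$ is the factor of $\eta(u)$, and since $\eta(u)\in F$ lies below $\tp(F)$, we get $\eta(u)\succeq x$ in $T$. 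Once this is settled the claim follows with no further computation.
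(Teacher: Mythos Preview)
Your proof is correct and follows exactly the paper's approach: both argue that $\ang{V}{x,\sfrac{\Tt}{\Pp}}=\ang{V}{x,\Tt}$ and hence the relation $\sim_x$ is literally the same in the two decompositions. In fact you give more detail than the paper, which simply records the set equality as an observation before the claim and then dispatches the claim in one sentence; your careful verification of that equality (using that $x\in\tp(\Pp)$ and that factors are connected subtrees) fills in exactly the small gap the paper leaves implicit.
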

\begin{proof}
Since for every node $x\in \tp(\Pp)$ we have $\ang{V}{x,\sfrac{\Tt}{\Pp}}=\ang{V}{x,\Tt}=\ang{V}{x}$, also the relation $\sim_x$ over $\ang{V}{x}$ computed in $\Tt$ is the same as computed in $\sfrac{\Tt}{\Pp}$. As $\Tt$ has diversity at most $k$, this relation has always at most $k$ equivalence classes, which implies that $\sfrac{\Tt}{\Pp}$ also has diversity at most $k$.
\cqed\end{proof}

\begin{claim}
$\sfrac{\Tt}{\Pp}$ is $\Dd_{i-1}$-governed.
\end{claim}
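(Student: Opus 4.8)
The plan is to check that for every node $x$ of $\sfrac{T}{\Pp}$, i.e.\ every $x\in\tp(\Pp)$, the graph $\ang{G}{x,\sfrac{\Tt}{\Pp}}$ lies in $\Dd_{i-1}$; by the definition of $\Dd_{i-1}$ it suffices to produce a $\Cc$-governed decomposition of $\ang{G}{x,\sfrac{\Tt}{\Pp}}$ of diversity at most $k$ together with a tagging whose induced $\Ff$-labelled tree belongs to $\Hh_{i-1}$. Fix $x$ and let $F$ be the factor of $\Pp$ with $\tp(F)=x$; by the choice of $\Pp$ the factor $F$, equipped with the restricted labelling $\Lb|_{F}$, belongs to $\Hh_{i-1}$ as an $\Ff$-labelled tree. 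The decomposition I would use is $\Tt_F=(F,\eta_F)$, where for $v\in\ang{V}{x}$ I define $\eta_F(v)$ to be the $\preceq$-maximal node of $F$ that is an ancestor of $\eta(v)$ in $T$; this is well defined because $x\preceq\eta(v)$, $x\in F$, and the ancestors of $\eta(v)$ are pairwise $\preceq$-comparable. Concretely, $\eta_F(v)=\eta(v)$ when $\eta(v)\in F$, and otherwise $\eta_F(v)$ is the last node of $F$ on the path in $T$ from $x$ to $\eta(v)$.

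The first thing to establish is a batch of bookkeeping identities, all ultimately consequences of $F$ being a connected subtree of $T$. For every $y\in F$ one has $\ang{V}{y,\Tt_F}=\ang{V}{y,\Tt}$, because $\eta_F(v)\succeq y$ is equivalent to ``$y$ is an ancestor of $\eta(v)$ lying in $F$'', which, since $y\in F$, is equivalent to $\eta(v)\succeq y$. Next --- and this is the heart of the matter --- I would show $\ang{E}{x,\sfrac{\Tt}{\Pp}}=\{e\in E(G):\eta(e)\in F\}=\bigcup_{y\in F}\ang{E}{y,\Tt}$; this is a variant of Claim~\ref{cl:commutes}, obtained by unfolding the definition of the quotient tree and using that the least common ancestor in $\sfrac{T}{\Pp}$ of the factor-tops of $\eta(u)$ and $\eta(v)$ is precisely the factor-top of $\eta(e)$. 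Combining this with a short case analysis --- for an edge $e=uv$ with $\eta(e)\in F$ one checks, from $\eta_F(u),\eta_F(v)\in F$ and $\eta_F(w)\preceq\eta(w)$ for $w\in\{u,v\}$, that the least common ancestor in $F$ of $\eta_F(u)$ and $\eta_F(v)$ equals $\eta(e)$ --- yields $\ang{E}{y,\Tt_F}=\ang{E}{y,\Tt}$ for every $y\in F$. Hence $\ang{G}{y,\Tt_F}=\ang{G}{y,\Tt}\in\Cc$ for all $y\in F$, so $\Tt_F$ is $\Cc$-governed.

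For the diversity bound I would argue that on the common set $\ang{V}{y,\Tt}=\ang{V}{y,\Tt_F}$ the relation $\sim_y$ computed in $\Tt$ refines the relation $\sim_y$ computed in $\Tt_F$; since the former has at most $k$ classes, so has the latter, and the diversity of $\Tt_F$ is at most $k$. The point is that for $u\in\ang{V}{y}$ and $w\in\ang{V}{x}\setminus\ang{V}{y}$, if $uw\in E(G)$ then $\eta(uw)$ is a strict ancestor of $y$ satisfying $\eta(uw)\succeq x$, hence $\eta(uw)\in F$ and $uw$ is in fact an edge of $\ang{G}{x,\sfrac{\Tt}{\Pp}}$; consequently two vertices of $\ang{V}{y}$ that are $\sim_y$-equivalent in $\Tt$ have the same neighbourhood outside $\ang{V}{y}$ in $\ang{G}{x,\sfrac{\Tt}{\Pp}}$ as well. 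Finally, the restriction $\Lambda|_{F}=(\lambda^y)_{y\in V(F)}$ is a tagging of $\Tt_F$ --- property $(i)$ survives the refinement just noted, and $(ii)$ is inherited because ancestors within $F$ are ancestors within $T$ --- and since $\ang{V}{y,\Tt_F}=\ang{V}{y,\Tt}$ for all $y\in F$, the labelling induced by $\Lambda|_{F}$ on $\Tt_F$ is exactly $\Lb|_{F}$. As $(F,\Lb|_{F})\in\Hh_{i-1}$, the decomposition $\Tt_F$ witnesses $\ang{G}{x,\sfrac{\Tt}{\Pp}}\in\Dd_{i-1}$ (note $\ang{G}{x,\sfrac{\Tt}{\Pp}}\in\Cc^\clw$ as well, since $\Tt_F$ is a $\Cc$-governed decomposition of it of diversity at most $k$), which is what we want.

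The step I expect to be the real work is the pair of identities $\ang{E}{x,\sfrac{\Tt}{\Pp}}=\{e:\eta(e)\in F\}$ and, for such edges, ``the least common ancestor in $F$ of $\eta_F(u)$ and $\eta_F(v)$ equals $\eta(e)$''. Both require tracking carefully where $\eta(u)$, $\eta(v)$ and their least common ancestor sit relative to $F$ and to the subtrees of $T$ attached below $F$, and it is there that the connectedness of factors of $\Pp$ is used most essentially; once these are settled, the governance, diversity, and labelling parts are all routine verifications from the definitions.
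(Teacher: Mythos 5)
Your proof is correct and takes essentially the same route as the paper: you use the same decomposition $(F,\eta_F)$ of $\ang{G}{x,\sfrac{\Tt}{\Pp}}$ where $\eta_F$ maps each vertex to the deepest ancestor of its image lying in $F$, establish the same identities $\ang{G}{y,(F,\eta_F)}=\ang{G}{y,\Tt}$ for $y\in F$, bound the diversity by the same refinement comparison with $\sim_y$ in $\Tt$, and invoke the restricted tagging $\Lambda|_F$ with $\overline{\Lambda|_F}=\Lb|_F$ and $(F,\Lb|_F)\in\Hh_{i-1}$ to conclude membership in $\Dd_{i-1}$. The only difference is that you spell out the least-common-ancestor bookkeeping that the paper dismisses as routine.
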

\begin{proof}
Consider any node $x$ of $\sfrac{T}{\Pp}$. 
Then $x=\tp(F)$ for some factor $F$ of $\Pp$.
Let $\eta_F\colon \ang{V}{x}\to V(F)$ be defined as follows: for $u\in \ang{V}{x}$ we put
\begin{itemize}
\item $\eta_F(u)=\eta(u)$, if $\eta(u)$ belongs to $F$; and
\item $\eta_F(u)$ to be the least ancestor of $\eta(u)$ that belongs to $F$, otherwise.
\end{itemize}
Let $H=\ang{G}{x,\sfrac{\Tt}{\Pp}}$ that is, the vertex set of $H$ is $\ang{V}{x}$ and the edge set of $H$ consist of all edges $uv$ of $G$ such that the least common ancestor of $\eta(u)$ and $\eta(v)$ in $T$ belongs to $F$. Then
 $(F,\eta_F)$ is a decomposition of $H$. A similar argument as in Claim~\ref{cl:diversity} shows that the diversity of $(F,\eta_F)$ is at most $k$. Also, $(F,\eta_F)$ as a decomposition of $H$ is $\Cc$-governed, because 
 $$\ang{H}{y,(F,\eta_F)}=\ang{G}{y,\Tt}\qquad\textrm{for each node }y\textrm{ of }F.$$
Moreover, if $\Lambda|_F$ is the restriction of the tagging $\Lambda$ to the nodes of $F$ and $\Lb|_{F}$ is the restriction of the labelling $\Lb$ to the edges of $F$, then it is easy to see that $\overline{\Lambda|_F}=\Lb|_{F}$. Since $(F,\Lb|_{F})\in \Hh_{i-1}$, we infer that $(F,\eta_F)$ together with tagging $\Lambda|_F$ witnesses that $H\in \Dd_{i-1}$. As $x$ was taken arbitrarily, we conclude that $\sfrac{\Tt}{\Pp}$ is $\Dd_{i-1}$-governed.
\cqed\end{proof}

\begin{claim}
$\sfrac{\Tt}{\Pp}$ is either splendid or shallow.
\end{claim}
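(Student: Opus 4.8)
The plan is to reduce this claim to the corresponding dichotomy for $\Ff$-labelled trees already established in Lemma~\ref{lem:colcombet-trees}, via Claim~\ref{cl:commutes}. Recall that $\sfrac{\Tt}{\Pp}=(\sfrac{T}{\Pp},\sfrac{\eta}{\Pp})$ carries the tagging $\Lambda|_{\Pp}$, and that the labelling this tagging induces on the decomposition tree is $\overline{\Lambda|_{\Pp}}$, which by Claim~\ref{cl:commutes} equals $\sfrac{\Lb}{\Pp}$. By the choice of the factorization $\Pp$ coming from Lemma~\ref{lem:colcombet-trees}, the $\Ff$-labelled tree $(\sfrac{T}{\Pp},\sfrac{\Lb}{\Pp})$ is either splendid or shallow. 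So I would split into these two cases.

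If $(\sfrac{T}{\Pp},\sfrac{\Lb}{\Pp})$ is shallow, then by definition $\sfrac{T}{\Pp}$ has depth at most two, and since the underlying tree of the decomposition $\sfrac{\Tt}{\Pp}$ is exactly $\sfrac{T}{\Pp}$, the decomposition $\sfrac{\Tt}{\Pp}$ is shallow by the definition of depth of a decomposition. If instead $(\sfrac{T}{\Pp},\sfrac{\Lb}{\Pp})$ is splendid, then the set $\{\sfrac{\Lb}{\Pp}(e)\colon e\in E(\sfrac{T}{\Pp})\}$ is forward Ramsey in $\Ff$. Using Claim~\ref{cl:commutes} this set is precisely $\{\overline{\Lambda|_{\Pp}}(e)\colon e\in E(\sfrac{T}{\Pp})\}$, so the tagging $\Lambda|_{\Pp}$ of the decomposition $\sfrac{\Tt}{\Pp}$ witnesses that $\sfrac{\Tt}{\Pp}$ is splendid, exactly as required by the definition of a splendid decomposition. (One should note at this point that $\Lambda|_{\Pp}$ is indeed a legitimate tagging of $\sfrac{\Tt}{\Pp}$: property $(i)$ holds because $\ang{V}{x,\sfrac{\Tt}{\Pp}}=\ang{V}{x,\Tt}$ and $\sim_x$ is computed identically in both decompositions, as already observed before Claim~\ref{cl:diversity}, and property $(ii)$ holds because the ancestor relation on $\tp(\Pp)$ in $\sfrac{T}{\Pp}$ is inherited from that in $T$, so restricting $\Lambda$ to $\tp(\Pp)$ preserves the required ancestor-consistency.)

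The only real content here is the bookkeeping identity of Claim~\ref{cl:commutes}: that forming the quotient of a decomposition along $\Pp$ and then reading off the induced $\Ff$-labelling gives the same thing as forming the induced $\Ff$-labelling of $(T,\eta)$ first and then taking the tree quotient $\sfrac{\Lb}{\Pp}$. This is purely a matter of unwinding the two definitions — on an edge $e=xy$ of $\sfrac{T}{\Pp}$, both sides are computed from the chain of sets $\ang{V}{\cdot}$ along the path in $T$ from $x$ to $y$, and one checks the composition of the edge-functions of $\Lb$ along that path matches the single function $\overline{\Lambda|_{\Pp}}(e)$ built directly from the labels $\lambda^x$ and $\lambda^y$; the key point is that $\lambda^z$ for intermediate nodes $z$ is consistent with $\lambda^x$ by property $(ii)$ of a tagging. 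So the main obstacle is not a difficulty but a verification: making sure the diagram genuinely commutes, i.e. that no information is lost or distorted when passing through the intermediate taggings. Once Claim~\ref{cl:commutes} is granted (as the excerpt states, it ``follows from definitions in a straightforward manner''), the present claim is immediate from the case analysis above.
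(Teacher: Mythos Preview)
Your proposal is correct and follows essentially the same approach as the paper: case-split on whether $(\sfrac{T}{\Pp},\sfrac{\Lb}{\Pp})$ is shallow or splendid, handle the shallow case immediately, and in the splendid case invoke Claim~\ref{cl:commutes} to conclude that the tagging $\Lambda|_{\Pp}$ witnesses splendidness of $\sfrac{\Tt}{\Pp}$. Your added verification that $\Lambda|_{\Pp}$ is a legitimate tagging and your unwinding of Claim~\ref{cl:commutes} are extra detail the paper omits, but the argument is the same.
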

\begin{proof}
Recall that $(\sfrac{T}{\Pp},\sfrac{\Lb}{\Pp})$, as an $\Ff$-labelled tree, is either splendid or shallow. In the latter case we can immediately conclude, because then the decomposition $\sfrac{\Tt}{\Pp}$ has depth at most $2$. In the former case, by Claim~\ref{cl:commutes} we infer that $(\sfrac{T}{\Pp},\overline{\Lambda|_{\Pp}})$ is splendid. This means that the tagging $\Lambda|_{\Pp}$ witnesses that $\sfrac{\Tt}{\Pp}$ is splendid.
\cqed\end{proof}

The above three claims verify that $\sfrac{\Tt}{\Pp}$ has the required properties and we are done.
\end{proof}

\section{Treating shallow and splendid decompositions}\label{sec:splendid}

\subsection{Shallow decompositions}

For shallow decompositions we may use a direct product argument.

\begin{lemma}\label{lem:depth-2}
Let $\Cc$ be a class of graphs that is $\chi$-bounded by a non-decreasing function~$g$.
Then the class of graphs that admit $\Cc$-governed shallow decompositions is $\chi$-bounded by the function $g^2$.
\end{lemma}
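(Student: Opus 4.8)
The plan is to prove Lemma~\ref{lem:depth-2} by a direct product-coloring argument, exploiting the fact that a shallow decomposition has depth at most two and therefore admits only a bounded hierarchy of ``torsos'' $\ang{G}{x}$, each of which lives in $\Cc$. Let $G$ be a graph admitting a $\Cc$-governed shallow decomposition $(T,\eta)$, set $\omega=\omega(G)$, and let $r$ be the root of $T$. Every edge $e=uv$ of $G$ has $\eta(e)$ equal to the least common ancestor of $\eta(u)$ and $\eta(v)$ in $T$; since $T$ has depth at most two, $\eta(e)$ lies at depth $0$ or $1$. First I would partition $E(G)$ according to the depth of $\eta(e)$: let $E_{0}$ be the edges with $\eta(e)=r$ and, for each child $y$ of $r$, let $E_{y}$ be the edges with $\eta(e)=y$. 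Note $E(G)=E_0\sqcup\bigsqcup_{y} E_y$, and for distinct children $y,y'$ the edge sets $E_y,E_{y'}$ touch disjoint vertex sets ($\ang{V}{y}$ and $\ang{V}{y'}$ are disjoint). Hence the subgraph $(V(G),\bigcup_y E_y)$ is the disjoint union, over children $y$ of $r$, of the graphs $\ang{G}{y}$ together with isolated vertices, and $(V(G),E_0)=\ang{G}{r}$.

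The key point is that each $\ang{G}{y}$ (for $y$ a child of $r$, viewed as a graph on $\ang{V}{y}$) has a $\Cc$-governed shallow decomposition of its own — namely restrict $(T,\eta)$ to the subtree rooted at $y$ — but more simply, $\ang{G}{y}$ is a graph of depth-one decomposition, so I can instead directly observe the following: since the decomposition is $\Cc$-governed we have $\ang{G}{x}\in\Cc$ for every node $x$, so $\ang{G}{r}\in\Cc$ and $\ang{G}{y}\in\Cc$ for every child $y$. Therefore $\omega(\ang{G}{r})\leq\omega$ and $\omega(\ang{G}{y})\leq\omega$, and the $\chi$-boundedness of $\Cc$ by $g$ gives a proper coloring $c_0$ of $\ang{G}{r}$ with at most $g(\omega)$ colors, and for each child $y$ a proper coloring $c_y$ of $\ang{G}{y}$ with at most $g(\omega)$ colors. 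Because the graphs $\{\ang{G}{y}\}_y$ are vertex-disjoint, the $c_y$'s combine into a single coloring $c_1$ of $(V(G),\bigcup_y E_y)$ using at most $g(\omega)$ colors total (reuse the same palette across different $y$, padding $c_0$ and $c_1$ arbitrarily on vertices not otherwise constrained). Finally, define $c(v)=(c_0(v),c_1(v))$; this uses at most $g(\omega)^2=g^2(\omega)$ colors, and it is proper because every edge of $G$ lies in either $E_0$ (so its endpoints differ under $c_0$) or some $E_y$ (so its endpoints differ under $c_1$). Since $g$ is non-decreasing, $g^2$ is a legitimate non-decreasing $\chi$-bounding function, completing the proof.

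I do not expect a real obstacle here; the statement is genuinely a ``straightforward product argument'' as the text advertises. The only point requiring a little care is the bookkeeping that the edge classes $E_y$ for different children $y$ of $r$ involve disjoint vertex sets, which is exactly the remark made after the definition of $\ang{G}{x}$ (for a child $y$ of $x$, the set $\ang{V}{y}$ is independent in $\ang{G}{x}$, and more to the point $\ang{V}{y}\cap\ang{V}{y'}=\emptyset$ for $y\neq y'$ since $T$ is a tree). This disjointness is what lets the per-child colorings $c_y$ share a single palette of size $g(\omega)$ rather than forcing a palette that grows with the number of children. A secondary small point is to handle vertices $v$ with $\eta(v)=r$ or with $\eta(v)$ a child having no incident edge in the relevant class: such vertices are simply given an arbitrary color in the corresponding coordinate, which cannot create a monochromatic edge. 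With these remarks in place the argument is complete.
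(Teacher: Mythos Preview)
Your approach is the same product-coloring argument as the paper's, and it shares a common oversight: neither version correctly handles all edges in a genuinely depth-$2$ tree. Your claim ``since $T$ has depth at most two, $\eta(e)$ lies at depth $0$ or $1$'' is false: if $\eta(u)=\eta(v)=z$ for some node $z$ at depth $2$, then $\eta(e)=z$, and this edge lies in none of your sets $E_0, E_y$. Concretely, let $T$ be the path on three nodes $r,y,z$ rooted at $r$, let $a,b$ satisfy $\eta(a)=\eta(b)=z$, and let $ab\in E(G)$; then $\ang{G}{r}$ and $\ang{G}{y}$ are both edgeless, so your product $(c_0,c_1)$ need not distinguish $a$ from $b$. (The paper's write-up has the complementary gap: it asserts that $\eta(u)\neq\eta(v)$ forces $\eta(e)$ to be the root, which fails when $\eta(u)$ and $\eta(v)$ are distinct nodes below a common child of $r$.)

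The repair is painless. Add a third coordinate $c_2$ assembled from proper colorings of $\ang{G}{z}$ over all grandchildren $z$; these torsos lie on pairwise disjoint vertex sets $\ang{V}{z}$ and belong to $\Cc$, so a single palette of size $g(\omega)$ again suffices. Then $(c_0,c_1,c_2)$ is proper and uses at most $g(\omega)^3$ colors, which is still polynomial and is all that is needed downstream. Alternatively, one may observe that every shallow quotient tree actually produced in the proof of Lemma~\ref{lem:colcombet-trees} has depth at most $1$, and for depth-$1$ decompositions your two-coordinate argument goes through verbatim.
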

\begin{proof}
Let $G$ be a graph that admits a decomposition $(T,\eta)$ that is $\Cc$-governed and shallow. Let $x$ be the root of $T$; then $\ang{V}{x}=V(G)$. By assumption $\ang{G}{x}\in \Cc$ and $\ang{G}{y}\in \Cc$ for every child $y$ of $x$. Hence, we can find a proper coloring $\phi_x$ of $\ang{G}{x}$ with at most $g(\omega(\ang{G}{x}))\leq g(\omega(G))$ colors, and, for every child $y$ of $x$, a proper coloring $\phi_y$ of $\ang{G}{y}$ with at most $g(\omega(\ang{G}{y}))\leq g(\omega(G))$ colors. Then define a coloring $\phi$ of $G$ as follows: for any $u\in V(G)$, we let
$$\phi(u) = (\phi_x(u),\phi_{\eta(u)}(u)).$$
Clearly, $\phi$ uses at most $g(\omega(G))^2$ colors, hence it suffices to verify that $\phi$ is a proper coloring of $G$.

Consider any edge $e=uv$ of $G$. If $\eta(u)=\eta(v)$, then also $\eta(e)=\eta(u)=\eta(v)$ and the edge $e$ is present in the graph $\ang{G}{\eta(e)}=\ang{G}{\eta(u)}=\ang{G}{\eta(v)}$. As $\phi_{\eta(e)}$ is a proper coloring of $\ang{G}{\eta(e)}$, the second coordinates of $\phi(u)$ and $\phi(v)$ differ. Otherwise, if $\eta(u)\neq \eta(v)$, then $\eta(e)=x$, because $T$ has depth at most $2$. Then the edge $e$ is present in the graph $\ang{G}{x}$ and, since $\phi_x$ is a proper coloring of $\ang{G}{x}$, we conclude that the first coordinates of $\phi(u)$ and $\phi(v)$ differ. In both cases we have $\phi(u)\neq \phi(v)$, hence $\phi$ is a proper coloring of $G$.
\end{proof}

\subsection{Splendid decompositions}

For splendid decompositions, the main tool will be a result of Chudnovsky et al.~\cite{ChudnovskyPST13}, which treats of the closure of classes of graphs under the operation of substitution, defined as follows. Suppose that $G$ is a graph and $\alpha$ is a mapping that associates with each vertex $u$ of $G$ some graph $\alpha(u)$. Then we define the graph $G[\alpha]$ as follows:
\begin{itemize}
    \item The vertex set of $G[\alpha]$ consists of pairs of the form $(u,v)$, where $u$ is a vertex of $G$ and $v$ is a vertex of $\alpha(u)$.
    \item Two vertices $(u,v)$ and $(u',v')$ are adjacent in $G[\alpha]$ if either $u\neq u'$ and $uu'$ is an edge in $G$, or $u=u'$ and $vv'$ is an edge in $\alpha(G)$.
\end{itemize}
Informally speaking, $G[\alpha]$ is obtained from $G$ by replacing every vertex $u$ with the graph $\alpha(u)$, and putting a complete join between graphs $\alpha(u)$ and $\alpha(u')$ whenever $uu'$ was an edge of $G$.

For a graph class $\Cc$, let $\Cc^\star$ be the closure of $\Cc$ under the substitution operation. That is, $\Cc^\star$ is the smallest class that contains $\Cc$ and whenever $G\in \Cc^\star$ and the image of $\alpha$ is contained in $\Cc^\star$, then $G[\alpha]\in \Cc^\star$ as well. The following claim links the operation of substitution with our terminology and is easy to verify; we leave it without a proof, as we will not use it later on.

\begin{lemma}\label{lem:sub-k}
For every hereditary graph class $\Cc$ closed under substituting vertices with edgeless graphs, the class $\Cc^\star$ is exactly the class of all graphs that admit a $\Cc$-governed decomposition of diversity $1$.
\end{lemma}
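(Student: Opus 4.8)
The plan is to prove the two inclusions separately. Write $\Dd$ for the class of all graphs that admit a $\Cc$-governed decomposition of diversity $1$; the goal is $\Cc^\star=\Dd$.

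\emph{From decompositions to substitutions: $\Dd\subseteq\Cc^\star$.} I would prove this by induction on the number of nodes of the decomposition tree. If $(T,\eta)$ witnesses $G\in\Dd$ and $T$ has a single node $r$, then $\ang{V}{r}=V(G)$ and $\ang{E}{r}=E(G)$, so $G=\ang{G}{r}\in\Cc\subseteq\Cc^\star$. Otherwise let $r$ be the root, let $y_1,\dots,y_m$ be the children $y$ of $r$ with $\ang{V}{y}\neq\emptyset$ (children with $\ang{V}{y}=\emptyset$ contribute nothing and may be discarded; if $m=0$ then again $G=\ang{G}{r}\in\Cc$), and set $V_r=\eta^{-1}(r)$; then $V(G)$ is the disjoint union of $V_r$ and the sets $\ang{V}{y_i}$. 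For each $i$, restricting $(T,\eta)$ to the subtree rooted at $y_i$ yields a $\Cc$-governed decomposition of diversity $1$ of the induced subgraph $G[\ang{V}{y_i}]$ --- a routine check, using that the sets $\ang{V}{z}$ and the edge sets $\ang{E}{z}$ are unchanged for $z\succeq y_i$ and that passing to an induced subgraph can only refine the relations $\sim_z$. Hence $G[\ang{V}{y_i}]\in\Cc^\star$ by the induction hypothesis. Now diversity $1$ at $y_i$ says exactly that $\ang{V}{y_i}$ is a \emph{module} of $G$: every vertex outside it is adjacent to all or to none of $\ang{V}{y_i}$; combining the modularity of $\ang{V}{y_i}$ and of $\ang{V}{y_j}$ shows that for $i\neq j$ the bipartite graph between them is complete or empty. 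Pick a representative $r_i\in\ang{V}{y_i}$ and let $H'=G[\{r_1,\dots,r_m\}\cup V_r]$. Then $G\cong H'[\alpha]$, where $\alpha(r_i)=G[\ang{V}{y_i}]$ and $\alpha(v)$ is a single-vertex graph for $v\in V_r$: adjacency between two vertices of $G$ lying in distinct parts of the partition depends only on which parts contain them and matches the corresponding adjacency in $H'$, while adjacency inside a part $\ang{V}{y_i}$ is recorded by $\alpha(r_i)$. Finally every edge of $H'$ joins two distinct parts, hence lies in $\ang{E}{r}$, so $H'$ is an induced subgraph of $\ang{G}{r}\in\Cc$; as $\Cc$ is hereditary, $H'\in\Cc\subseteq\Cc^\star$ and single-vertex graphs lie in $\Cc$, so $G=H'[\alpha]\in\Cc^\star$.

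\emph{From substitutions to decompositions: $\Cc^\star\subseteq\Dd$.} Since $\Cc^\star$ is the smallest class containing $\Cc$ that is closed under substitution, it is enough to check that $\Cc\subseteq\Dd$ --- witnessed by the one-node decomposition --- and that $\Dd$ is closed under substitution. So suppose $G=H[\alpha]$ with $H\in\Dd$ and $\alpha(u)\in\Dd$ for every $u\in V(H)$, and fix $\Cc$-governed decompositions $(T_H,\eta_H)$ of $H$ and $(T_u,\eta_u)$ of each $\alpha(u)$, all of diversity $1$. Build $T$ from $T_H$ by attaching, for every $u\in V(H)$, a fresh isomorphic copy of $T_u$ as an additional child subtree hanging below the node $\eta_H(u)$, and set $\eta((u,v))$ to be the image in that copy of the node $\eta_u(v)$. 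One verifies the requirements by inspecting the location of a node $x$ of $T$. If $x$ lies in the copy of some $T_u$, say corresponding to $x'\in V(T_u)$, then $\ang{V}{x}=\{u\}\times\ang{V}{x',T_u}$ and $\ang{E}{x}$ consists of exactly the edges of $\alpha(u)$ whose endpoints have least common ancestor $x'$ in $T_u$, so $\ang{G}{x}$ is isomorphic to the torso of $x'$ in $(T_u,\eta_u)$ and lies in $\Cc$, and $\sim_x$ corresponds to the relation $\sim_{x'}$ computed in $\alpha(u)$, which has a single class. If $x\in V(T_H)$, then $\ang{V}{x}=\bigcup_{u\in\ang{V}{x,T_H}}\{u\}\times V(\alpha(u))$, and one checks that any edge of $G$ joining $(u,v)$ and $(u',v')$ with $u\neq u'$ has least common ancestor $\eta_H(uu')$ in $T$; hence $\ang{G}{x}$ is obtained from $\ang{H}{x}\in\Cc$ by substituting each vertex $u$ with the edgeless graph on vertex set $V(\alpha(u))$, and this is precisely where the assumption that $\Cc$ is closed under substituting vertices with edgeless graphs is used to conclude $\ang{G}{x}\in\Cc$. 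Moreover, for such an $x$ the neighbourhood of a vertex $(u,v)\in\ang{V}{x}$ outside $\ang{V}{x}$ is the union of the sets $\{u'\}\times V(\alpha(u'))$ over $u'\in N^H(u)\setminus\ang{V}{x,T_H}$, which depends only on $N^H(u)\setminus\ang{V}{x,T_H}$ and therefore, by diversity $1$ of $(T_H,\eta_H)$, is the same for every vertex of $\ang{V}{x}$; so $\sim_x$ has a single class. Thus $(T,\eta)$ witnesses $G\in\Dd$, and $\Dd$ is closed under substitution.

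Most of the work is bookkeeping: tracking how the sets $\ang{V}{\cdot}$ and $\ang{E}{\cdot}$, the induced map of $\eta$ on edges, and the relations $\sim_x$ transform under restricting a decomposition and under the gluing construction. The one genuinely structural point --- and the step I expect to need the most care --- is the identification, in the second part, of the torso $\ang{G}{x}$ at a node $x\in V(T_H)$ as the result of substituting edgeless graphs into $\ang{H}{x}$; in particular one has to check that the join edges between two parts $\{u\}\times V(\alpha(u))$ and $\{u'\}\times V(\alpha(u'))$ are introduced at the node $\eta_H(uu')$ and nowhere else. This is also the unique place where the extra closure hypothesis on $\Cc$ is invoked.
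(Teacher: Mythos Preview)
The paper explicitly leaves this lemma without proof (``we leave it without a proof, as we will not use it later on''), so there is no argument to compare against. Your two-direction proof is correct and is exactly the kind of routine verification the authors had in mind: for $\Dd\subseteq\Cc^\star$ you peel off the root and use that diversity~$1$ at each child means the corresponding bag is a module, and for $\Cc^\star\subseteq\Dd$ you graft the decompositions of the substituted graphs below the appropriate nodes and check that the torsos at nodes of $T_H$ become $\ang{H}{x}$ with vertices blown up to edgeless graphs---which is precisely where the closure hypothesis on $\Cc$ is needed.

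One small slip in wording: you write that ``passing to an induced subgraph can only \emph{refine} the relations $\sim_z$'', but in fact it \emph{coarsens} them (equal neighbourhoods outside $\ang{V}{z}$ in $G$ certainly give equal neighbourhoods in the smaller set $\ang{V}{y_i}\setminus\ang{V}{z}$). The conclusion you draw---that a single class stays a single class---is the right one, so this is purely terminological.
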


Chudnovsky et al.~\cite{ChudnovskyPST13} proved the following.

\begin{theorem}[Theorem 2.3 of~\cite{ChudnovskyPST13}]\label{thm:substitution}
If a hereditary graph class $\Cc$ is polynomially $\chi$-bounded, then so is $\Cc^{\star}$.
\end{theorem}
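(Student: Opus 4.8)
The natural line of attack is an induction over the \emph{modular decomposition}. I would first record two standard facts: $\Cc^{\star}$ is hereditary (since $\Cc$ is), and in the modular decomposition of any $G\in\Cc^{\star}$ the quotient at every prime node is an induced subgraph of $G$ and therefore lies in $\Cc$. Indeed, the class of graphs all of whose prime modular quotients lie in $\Cc$ contains $\Cc$ (prime quotients are induced subgraphs and $\Cc$ is hereditary) and is closed under substitution (a standard property of modular decomposition: substituting does not create new prime quotients), hence it contains $\Cc^{\star}$. Consequently every $G\in\Cc^{\star}$ is, at the root of its modular decomposition, either a disjoint union of smaller members of $\Cc^{\star}$; or a join $G_{1}+\cdots+G_{r}$ of smaller members of $\Cc^{\star}$ with $\omega(G_{1})+\cdots+\omega(G_{r})=\omega(G)$; or of the form $H[\alpha]$ with $H\in\Cc$ prime and each $\alpha(v)\in\Cc^{\star}$ a proper induced subgraph of $G$.

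Fix a nondecreasing polynomial $\chi$-bounding function $f$ for $\Cc$ with $f(x)\geq\max(1,x)$. The plan is to produce a polynomial $g(x)=Cx^{d}$ --- nondecreasing, superadditive on positive integers, with $d$ and $C$ fixed but chosen large at the end --- and prove $\chi(G)\leq g(\omega(G))$ for every $G\in\Cc^{\star}$ by induction on $|V(G)|$. The two ``cograph-type'' cases are immediate: for a disjoint union $\chi(G)=\max_{i}\chi(G_{i})\leq\max_{i}g(\omega(G_{i}))\leq g(\omega(G))$ by monotonicity, and for a join $\chi(G)=\sum_{i}\chi(G_{i})\leq\sum_{i}g(\omega(G_{i}))\leq g(\sum_{i}\omega(G_{i}))=g(\omega(G))$ by superadditivity. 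Everything therefore reduces to the prime case $G=H[\alpha]$.

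For the prime case the key device is to bucket the modules by \emph{clique-scale}: for $j\geq 0$ set $B_{j}=\{v\in V(H):\omega(\alpha(v))\in(\omega(G)/2^{j+1},\,\omega(G)/2^{j}]\}$. A clique of $H$ lying inside one bucket $B_{j}$ gives, by taking maximum cliques in the corresponding modules, a clique of $G$ of size exceeding (its size)$\,\cdot\omega(G)/2^{j+1}$; hence $\omega(H[B_{j}])\leq 2^{j+1}-1$, so $\chi(H[B_{j}])\leq f(2^{j+1})$. Combining a proper colouring of $H[B_{j}]$ with proper colourings of the modules $\alpha(v)$, $v\in B_{j}$ --- each using at most $g(\omega(G)/2^{j})$ colours by the induction hypothesis, as $\alpha(v)$ has fewer vertices --- the part of $G$ carried by $B_{j}$ uses at most $f(2^{j+1})\,g(\omega(G)/2^{j})$ colours. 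The bucket $B_{0}$, of modules with clique number greater than $\omega(G)/2$, is special: such ``heavy'' modules are pairwise non-adjacent in $H$ (two adjacent ones would yield a clique of $G$ larger than $\omega(G)$), so the part of $G$ they carry is a disjoint union needing only $\max_{v\in B_{0}}\chi(\alpha(v))\leq g(\omega(G)-1)$ colours, where $\omega(\alpha(v))\leq\omega(G)-1$ since $v$ has a neighbour in the connected graph $H$. For $j\geq 1$ the quantities $f(2^{j+1})g(\omega(G)/2^{j})$ form a geometric series, and once $d=\deg g$ is taken larger than $2\deg f$ (by a constant depending on $f$) their sum is a small constant fraction of $g(\omega(G))$.

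The delicate point --- which I expect to be the main obstacle --- is that $g(\omega(G)-1)$ is essentially as large as $g(\omega(G))$, so one cannot place the heavy and light contributions on disjoint palettes. Resolving this must exploit the structure around a heaviest module $\alpha(v^{*})$: one takes the heaviness threshold sharper, of the order $\omega(G)$ minus a sublinear term (so that the slack $g(\omega(G))-g(\omega(\alpha(v^{*})))$ is still of order $\omega(G)^{d-1}$), observes that the neighbours of $v^{*}$ in $H$ then carry modules of very small clique number, while the non-neighbours of $v^{*}$ carry modules non-adjacent to $\alpha(v^{*})$ and hence able to \emph{reuse} $\alpha(v^{*})$'s colours, and then colours $G$ by recursing into $\alpha(v^{*})$ and extending outward within the available slack. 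Controlling this reuse --- in particular bounding how many colours the few neighbours of each small module already occupy --- is the technical heart, and is what forces a carefully strengthened induction hypothesis: a colouring of $G$ with $g(\omega(G))$ colours that keeps enough colours free around a designated near-universal module. Once this is established, choosing $d$ and $C$ large enough (fixed) to absorb all the constants above makes every inequality close, and the induction delivers a polynomial $\chi$-bounding function for $\Cc^{\star}$.
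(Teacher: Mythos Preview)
The paper does not prove this theorem; it is quoted verbatim as Theorem~2.3 of Chudnovsky, Penev, Scott, and Trotignon~\cite{ChudnovskyPST13} and used as a black box in the proof of Lemma~\ref{lem:splendid}. So there is no in-paper argument to compare against.

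As for your proposal on its own merits: the modular-decomposition-plus-clique-scale-bucketing skeleton is indeed the shape of the argument in~\cite{ChudnovskyPST13}, and you have correctly located where the real difficulty lies --- the heavy bucket $B_{0}$ cannot simply be placed on a palette disjoint from the light buckets, because $g(\omega(G)-1)$ is not appreciably smaller than $g(\omega(G))$. Concretely, with $g(x)=Cx^{d}$ the slack $g(\omega)-g(\omega-1)$ is $\Theta(\omega^{d-1})$, while your light-bucket sum $\sum_{j\geq 1}f(2^{j+1})g(\omega/2^{j})$ is a fixed constant times $g(\omega)=\Theta(\omega^{d})$; so the two pieces overflow $g(\omega)$ for all large $\omega$, no matter how $d$ and $C$ are chosen. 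Your final sentence, that ``choosing $d$ and $C$ large enough makes every inequality close,'' is therefore not supported by what precedes it.

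You acknowledge this and sketch a repair --- pick a heaviest module $\alpha(v^{*})$, let its non-neighbours reuse its palette, and control the neighbours via a ``carefully strengthened induction hypothesis'' --- but you neither state that strengthened hypothesis precisely nor check that it is preserved through the prime step. In~\cite{ChudnovskyPST13} this is exactly the technical heart of the paper: the inductive statement carries extra data about how colours are allocated relative to a designated maximum clique, and verifying that this data survives substitution is where most of the work (and the exponent blow-up from $A$ to $3A+11$) happens. What you have written is an accurate roadmap of the known proof, with the load-bearing lemma replaced by a description of what it ought to do; it is not yet a proof.
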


Thus, by Lemma~\ref{lem:sub-k}, Chudnovsky et al. in fact essentially proved Theorem~\ref{thm:main-technical} for $k=1$. We will now use this result to prove Theorem~\ref{thm:main-technical} in the case when the provided decomposition is splendid.

\setcounter{claim}{0}

\begin{lemma}\label{lem:splendid}
Let $\Cc$ be a hereditary class of graphs that is polynomially $\chi$-bounded. Then the class of graphs that admit splendid $\Cc$-governed decompositions of diversity at most $k$ is also polynomially $\chi$-bounded.
\end{lemma}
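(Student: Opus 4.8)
The plan is to exploit the forward Ramsey structure of a splendid decomposition to split the edge set of $G$ into two parts, each of which is realized by a substitution-closure of $\Cc$, and then conclude by Theorem~\ref{thm:substitution} together with a product-of-colorings argument. So let $(T,\eta)$ be a splendid $\Cc$-governed decomposition of $G$ of diversity at most $k$, witnessed by a tagging $\Lambda=(\lambda^x)_x$ whose labelling $\Lb\colon E(T)\to \Ff$ has forward Ramsey image. First I would set up a \emph{parity} partition of $E(G)$: assign to each node $x$ of $T$ its depth in $T$, and for an edge $e=uv$ of $G$ declare $e$ to be \emph{even} if $\eta(e)$ is at even depth and \emph{odd} otherwise. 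Let $G_{\mathrm{even}}$ and $G_{\mathrm{odd}}$ be the corresponding spanning subgraphs of $G$, so that $E(G)=E(G_{\mathrm{even}})\cup E(G_{\mathrm{odd}})$. Since $\chi(G)\le \chi(G_{\mathrm{even}})\cdot \chi(G_{\mathrm{odd}})$ via the product coloring $u\mapsto(\phi_{\mathrm{even}}(u),\phi_{\mathrm{odd}}(u))$, and $\omega(G_{\mathrm{even}}),\omega(G_{\mathrm{odd}})\le \omega(G)$, it suffices to bound $\chi(G_{\mathrm{even}})$ and $\chi(G_{\mathrm{odd}})$ each polynomially in $\omega(G)$; by symmetry I focus on one, say $G_{\mathrm{even}}$.

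The heart of the argument is to show that $G_{\mathrm{even}}$ belongs to $\Cc^{\star}$ (or, more precisely, to the closure under substitution of a class governed by $\Cc$ whose members have diversity at most $1$, which by the spirit of Lemma~\ref{lem:sub-k} reduces to $\Cc^\star$ up to taking the hereditary closure). The key structural observation is the following: fix a node $x$ of $T$ and consider two children $y_1,y_2$ of $x$. By the forward Ramsey property, for \emph{any} node $z$ that is a descendant of $y_i$ the composite label along the path from $x$ down to $z$ equals $\Lb(xy_i)$ — this is exactly the statement that $\Lb(xy_i)\cdot e = \Lb(xy_i)$ for every label $e$ on the path below, which is what forward Ramseyness gives since all these labels lie in the forward Ramsey set. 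Consequently the function $\lambda^x$ restricted to $\ang{V}{y_i}$ depends only on the label $\lambda^{y_i}$ of the vertex in question, and moreover it \emph{only takes one value} per child in the relevant sense — more carefully, the neighborhood of a vertex $v\in \ang{V}{y_i}$ \emph{outside} $\ang{V}{x}$ is completely determined by $\Lb(xy_i)(\lambda^{y_i}(v))$, and (this is the point) two vertices $v\in\ang{V}{y_1}$, $v'\in\ang{V}{y_2}$ in different children with $\Lb(xy_1)(\lambda^{y_1}(v)) = \Lb(xy_2)(\lambda^{y_2}(v'))$ have the same neighborhood outside $\ang{V}{x}$. This rigidity means that when we look only at even-depth edges, the graph $\ang{G_{\mathrm{even}}}{x}$ for $x$ at even depth, together with how it attaches to the blocks $\ang{V}{y}$ hanging below, behaves like a substitution: each child $y$ of $x$ contributes a ``module'' $\ang{V}{y}$ which, from the point of view of $\ang{G}{x}$, is entered only through at most $k$ ``types'', and because of forward Ramseyness these types do not get further refined by anything below $y$ at the same parity. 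One then assembles $G_{\mathrm{even}}$ top-down: start from $\ang{G}{r}$ at the root $r$, and at each even level substitute into each vertex the appropriately-indexed graph $\ang{G}{y}$ glued from the next even level down; the odd levels in between contribute no edges to $G_{\mathrm{even}}$ but are absorbed into the bookkeeping of which type each lower block presents upwards. Formalizing this requires a careful induction on the depth of $T$, maintaining the invariant that the ``even part of the subtree below $x$, with vertices grouped by the type they present to ancestors'' is obtained from graphs in $\Cc$ by substitutions. Each $\ang{G}{x}$ lies in $\Cc$ by $\Cc$-governedness, so all the building blocks are in $\Cc$, and the assembly uses only substitution.

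Given that $G_{\mathrm{even}}\in \Cc^{\star}$ (after passing to the hereditary closure $\widehat{\Cc}$ of $\Cc$, which is still polynomially $\chi$-bounded and hereditary, so that Theorem~\ref{thm:substitution} applies), Theorem~\ref{thm:substitution} yields a polynomial $q$ with $\chi(G_{\mathrm{even}})\le q(\omega(G_{\mathrm{even}}))\le q(\omega(G))$, and symmetrically for $G_{\mathrm{odd}}$. The product coloring then gives $\chi(G)\le q(\omega(G))^2$, which is polynomial in $\omega(G)$ as required, and hereditariness of the target class is immediate since deleting a vertex of $G$ restricts the decomposition and its tagging. The main obstacle I anticipate is precisely the middle step: making rigorous the claim that $G_{\mathrm{even}}$ (respectively $G_{\mathrm{odd}}$) is built by substitutions from $\Cc$. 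The delicate points are (i) handling the two-generation gap — a vertex's ``type as seen from its grandparent'' must be shown to be forced by forward Ramseyness even though an intermediate (odd) node sits in between and contributes no even edges; (ii) checking that substitution, which inserts complete joins between modules whenever their images are adjacent, exactly reproduces the adjacencies of $G_{\mathrm{even}}$ and does not over- or under-count — this is where one needs both conditions $(i)$ and $(ii)$ in the definition of a tagging, namely that $\lambda^x$ refines $\sim_x$ and that equalities of labels propagate upward; and (iii) keeping the diversity-$\le k$ bound under control so that each module presents only boundedly many types, which is what keeps the substitution structure well-defined. Once these are nailed down the rest is routine.
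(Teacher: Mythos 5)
Your overall skeleton --- splitting $E(G)$ according to the parity of the depth of $\eta(e)$, showing that each parity subgraph lies in the substitution closure $\Cc^\star$, invoking Theorem~\ref{thm:substitution}, and finishing with a product coloring --- is exactly the strategy of the paper's proof. However, the step you yourself flag as the main obstacle is a genuine gap, and forward Ramseyness, used the way you use it, does not close it. Your key claim is that each block $\ang{V}{z}$ hanging two levels below a node $x$ of the relevant parity behaves like a module of the parity subgraph induced on $\ang{V}{x}$ (all its vertices having the same neighborhood outside the block), so that this graph is a substitution into a member of $\Cc$. This is not forced: the vertices of $\ang{V}{z}$ may carry up to $k$ distinct tags, and vertices with distinct tags may have distinct neighborhoods among the edges introduced at $x$; equivalently, $u\sim_y v$ can fail for $u,v\in\ang{V}{z}$, where $y$ is the intermediate child. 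Forward Ramseyness does tell you that the composite label along any downward path equals the label of its topmost edge, but it does not make the individual labels $\Lb(e)\in\Ff$ constant functions, so it does not collapse the up-to-$k$ ``types'' of a block into one; a singleton forward Ramsey set consisting of the identity map on $[k]$ already shows this. With non-module blocks, substituting graphs into single vertices (which creates complete joins between blocks whose representatives are adjacent) does not reproduce the adjacency of $G_{\mathrm{even}}$, and there is no reason for $G_{\mathrm{even}}$ to lie in $\Cc^\star$ at this level of generality. Splitting each block into at most $k$ sub-blocks by type, as your point (iii) hints, does not obviously help either unless the split is made globally consistent across all levels.

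The missing idea, which is how the paper proceeds, is a preliminary reduction extracted from the algebraic structure of forward Ramsey subsets of $\Ff$: if $A\subseteq\Ff$ is forward Ramsey, fix any $e\in A$ and define $i\cong j$ iff $e(i)=e(j)$; then every $f\in A$ satisfies $f(i)=f(j)$ whenever $i\cong j$, and $f(i)\cong i$ (use $f\cdot e=f$ and $e\cdot f=e$). Consequently the $\cong$-class of $\lambda^x(u)$ is the same for every ancestor $x$ of $\eta(u)$, so $V(G)$ partitions into at most $k$ parts according to this class; each induced subgraph on a part inherits a splendid $\Cc$-governed decomposition of diversity at most $k$ (hereditariness of $\Cc$ is used here), and the parts are colored with disjoint palettes at the cost of a multiplicative factor $k$. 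After restricting to one part, every label $\Lb(e)$ is constant on the surviving tags, and only then does your module argument go through: constancy of $\Lb(yz)$ gives $\lambda^y(u)=\lambda^y(v)$, hence $u\sim_y v$, for all $u,v\in\ang{V}{z}$, which together with the absence of edges of the parity subgraph between $\ang{V}{z}$ and $\ang{V}{y}\setminus\ang{V}{z}$ makes each nonempty grandchild block a genuine module; the bottom-up induction showing membership in $\Cc^\star$ (with edgeless graphs added to $\Cc$, as you do) then proceeds essentially as you envisage and yields the bound $k\cdot g(\omega(G))^2$.
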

\begin{proof}
First, we need to understand forward Ramsey sets in the semigroup $\Ff$.

\begin{claim}\label{cl:compatible}
Suppose $A\subseteq \Ff$ is forward Ramsey. Then there exists an equivalence relation $\cong$ on $[k]$ such for every $f\in A$ and all $i,j\in [k]$ satisfying $i\cong j$, we have $f(i)=f(j)$ and $f(i)\cong i$.
\end{claim}
\begin{proof}
Fix any $e\in A$ and define the equivalence relation $\cong$ as follows: for $i,j\in [k]$, we have $i\cong j$ if and only if $e(i)=e(j)$. To verify that $\cong$ defined in this way has the required property, consider any $f\in A$ and any $i,j\in [k]$ such that $i\cong j$, that is, $e(i)=e(j)$. Since $f\cdot e=f$ due to $A$ being forward Ramsey, we have 
$$f(i)=(f\cdot e)(i)=f(e(i))=f(e(j))=(f\cdot e)(j)=f(j).$$
Similarly we have $e\cdot f=e$, hence
$$e(i)=(e\cdot f)(i)=e(f(i)),$$
implying that $i\cong f(i)$.
\cqed\end{proof}

We proceed to the main proof. We may assume that $\Cc$ contains all edgeless graphs, because adding all such graphs to $\Cc$ does not spoil the assumption that $\Cc$ is hereditary and polynomially $\chi$-bounded.


Consider any graph $G$ that admits a splendid $\Cc$-governed decomposition $\Tt=(T,\eta)$ of diversity at most $k$. Let $\Lambda=(\lambda^x)_{x\in V(T)}$ be a tagging of $\Tt$ that witnesses that $\Tt$ is splendid, that is, the set $\{\Lb(e)\colon E(T)\}$ is forward Ramsey. Let $\cong$ be the equivalence relation on $[k]$ provided by Claim~\ref{cl:compatible} for this set.

\begin{claim}\label{cl:partition-classes}
For every $u\in V(G)$ and $x,y\in V(T)$ satisfying $u\in \ang{V}{x}$ and $u\in \ang{V}{y}$, we have $\lambda^x(u)\cong\lambda^y(u)$.
\end{claim}
\begin{proof}
Note that the nodes $z\in V(T)$ satisfying $u\in \ang{V}{z}$ are exactly the ancestors of $\eta(u)$. As these ancestors induce a path in $T$, by the transitivity of $\cong$ it suffices to prove the claim in the case when $x$ and $y$ are adjacent in $T$, say $x$ is the parent of $y$.

Let $f=\Lb(xy)\in \Ff$. By the definition of $\Lb$ we have $f(\lambda^y(u))=\lambda^x(u)$, whereas by Claim~\ref{cl:compatible} we have $f(\lambda^y(u))\cong \lambda^y(u)$. The claim follows.
\cqed\end{proof}

By Claim~\ref{cl:partition-classes}, with every vertex $u\in V(G)$ we can associate an equivalence class $\tau(u)$ of $\cong$ with the following property:
$$\lambda^x(u)\in \tau(u)\qquad\textrm{for every node } x\textrm{ such that }u\in \ang{V}{x}.$$
For a class $\kappa$ of $\cong$, let
$$G^\kappa = G[\tau^{-1}(\kappa)]\qquad\textrm{and}\qquad \eta^{\kappa}=\eta|_{\tau^{-1}(\kappa)}.$$
Note that $(T,\eta^{\kappa})$ is a decomposition of $G^\kappa$ of diversity at most $|\kappa|\leq k$. Furthermore, tagging $\Lambda$ restricted to the vertices of $G^\kappa$ witnesses that $(T,\eta^{\kappa})$ is splendid. Finally, since $\Cc$ is hereditary, we conclude that $(T,\eta^{\kappa})$ is $\Cc$-governed.

We now argue that from now on we can focus on the case when $\cong$ has only one equivalence class. Indeed, suppose that under this assumption we are able to prove that $\chi(G)\leq h(\omega(G))$, for some non-decreasing polynomial $h$. Then in the general case, we may apply this reasoning to the graph $G^{\kappa}$, for every equivalence class $\kappa$ of $\cong$, thus showing that $G^\kappa$ can be properly colored with $h(\omega(G^\kappa))\leq h(\omega(G))$ colors. It now suffices to take the union of these colorings, using a different set of $h(\omega(G))$ colors for each of them, in order to see that $G$ can be properly colored with $k\cdot h(\omega(G))$ colors.

Hence, from now on we assume that $\cong$ has only one equivalence class, which is equivalent to the following assertion:
\begin{equation}\label{eq:cnst}
\textrm{for each }e\in E(T)\textrm{, the mapping }\Lb(e)\textrm{ is a constant function.}
\end{equation}

For a node $x$ of $T$, the {\em{depth}} of $x$ is the number of nodes on the path from the root of $T$ to~$x$. We now partition the edge set of $G$ into $E^0$ and $E^1$ as follows:
$$E^0=\{e\in E(T)\colon \eta(e)\textrm{ is at even depth}\},\qquad 
  E^1=\{e\in E(T)\colon \eta(e)\textrm{ is at odd depth}\}.$$
Also, we define subgraphs $G_0$ and $G_1$ of $G$ as follows:
$$G^0=(V(G),E^0),\qquad 
  G^1=(V(G),E^1).$$
The intuition now is that we would like to color each of the graphs $G^0,G^1$ separately and superpose the obtained two colorings. The following claim, which is the core argument of the proof, reduces this task to a direct application of Theorem~\ref{thm:substitution}.

\newcommand{\dwn}[2]{#1_{#2}}

\begin{claim}\label{cl:in-Cstar}
$G^0,G^1\in \Cc^\star$.
\end{claim}
\begin{proof}
We prove the statement for $G^0$, as the proof for $G^1$ is the same.
By a bottom-up induction on $T$ we argue the following statement: for every node $x$, the graph $\dwn{G^0}{x}:=G^0[\ang{V}{x}]$ belongs to $\Cc^\star$. Then the claim follows from applying this statement to the root of $T$.

Consider first the case when the depth of $x$ is odd. Then $\dwn{G^0}{x}$ consists of the disjoint union of: graphs $\dwn{G^0}{y}$ for $y$ ranging over children of $x$; and vertices $v$ satisfying $\eta(v)=x$, which are isolated in $\dwn{G^0}{x}$. By the induction assumption, graphs $\dwn{G^0}{y}$ belong to $\Cc^\star$. Thus, $\dwn{G^0}{x}$ can be obtained from an edgeless graph by substituting some of its vertices with graphs from $\Cc^\star$. Since we assumed that $\Cc$ contains all edgeless graphs, we conclude that $\dwn{G^0}{x}\in \Cc^\star$ as well.

Consider now the case when the depth of $x$ is even. Take any grandchild $z$ of $x$; that is, $z$ is a child of some child $y$ of $x$. We claim that
\begin{equation}\label{eq:moduleG0x}
N^{\dwn{G^0}{x}}(u)\setminus \ang{V}{z} = N^{\dwn{G^0}{x}}(v)\setminus \ang{V}{z} \quad\textrm{for all } u,v\in \ang{V}{z}.
\end{equation}
To see this, observe the following. 
First, by the construction of $\dwn{G^0}{x}$, in this graph $u$ and $v$ have no neighbors in $\ang{V}{y}\setminus \ang{V}{z}$.
Next, by~\eqref{eq:cnst} applied to the edge $yz$, we have 
$$\lambda^y(u)=\Lb(yz)(\lambda^z(u))=\Lb(yz)(\lambda^z(v))=\lambda^y(v).$$
By the definition of tagging, this implies that $u\sim_y v$, or equivalently
\begin{equation}\label{eq:smashed}
    N^G(u)\setminus \ang{V}{y}=N^G(v)\setminus \ang{V}{y}.
\end{equation}
This, in turn, entails
$$N^{\dwn{G^0}{x}}(u)\setminus \ang{V}{y}=N^{\dwn{G^0}{x}}(v)\setminus \ang{V}{y},$$
which together with $u$ and $v$ having no neighbors in $\ang{V}{y}\setminus \ang{V}{z}$ establishes~\eqref{eq:moduleG0x}.
Observe that by the definition of $\ang{G}{x}$, the set $\ang{V}{y}$ is independent in this graph. Therefore, by~\eqref{eq:smashed}, we can also conclude that 
\begin{equation}\label{eq:moduleGx}
N^{\ang{G}{x}}(u)\setminus \ang{V}{z} = N^{\ang{G}{x}}(v)\setminus \ang{V}{z} \quad\textrm{for all } u,v\in \ang{V}{z}.
\end{equation}

Let $H$ be the induced subgraph of $\ang{G}{x}$ obtained by removing, for every grandchild $z$ of $x$ with $\ang{V}{z}\neq \emptyset$, all but one vertex of $\ang{V}{z}$. Note here that, by~\eqref{eq:moduleGx}, $H$ does not depend on which vertex of $\ang{V}{z}$ is chosen, up to isomorphism. Define the following mapping on vertices of $H$:
\begin{itemize}
    \item if $u\in V(H)$ is such that $\{u\}=V(H)\cap \ang{V}{z}$ for some grandchild $z$ of $x$, then $\alpha(u)=\dwn{G^0}{z}$; and
    \item otherwise, $\alpha(u)$ is the one-vertex graph consisting only of $u$.
\end{itemize}
Then by~\eqref{eq:moduleG0x} we conclude that
\begin{equation}\label{eq:subst}
    H[\alpha]\textrm{ and }\dwn{G^0}{x}\textrm{ are isomorphic.}
\end{equation}
Since $(T,\eta)$ is $\Cc$-governed, we have $\ang{G}{x}\in \Cc$, which entails $H\in \Cc$ because $\Cc$ is hereditary. Further, all the graphs in the image of $\alpha$ belong to $\Cc^\star$ by the induction assumption. Then by~\eqref{eq:subst} we may conclude that $\dwn{G^0}{x}\in \Cc$, as claimed.
\cqed\end{proof}


Since $\Cc$ is polynomially $\chi$-bounded, by Theorem~\ref{thm:substitution} we conclude that $\Cc^\star$ is polynomially $\chi$-bounded as well, say by a nondecreasing polynomial $g(\cdot)$. Therefore, by Claim~\ref{cl:in-Cstar} we conclude that there are proper colorings $\phi^0$ and $\phi^1$ of $G^0$ and $G^1$, respectively, where $\phi^0$ uses at most $g(\omega(G^0))\leq g(\omega(G))$ colors and $\phi^1$ uses at most $g(\omega(G^1))\leq g(\omega(G))$ colors. Since every edge of $G$ belongs either to $G^0$ or to $G^1$, we conclude that $\phi$ defined as $\phi(u)=(\phi^0(u),\phi^1(u))$ is a proper coloring of $G$ with $g(\omega(G))^2$ colors. 
Keeping in mind the multiplicative factor of $k$ incurred by the reduction to the case when $\cong$ has one equivalence class, we conclude that the class of graphs that admit splendid $\Cc$-governed decompositions of diversity at most $k$ is $\chi$-bounded by the polynomial $t\mapsto k\cdot g(t)^2$.
\end{proof}

\subsection{Concluding the proof}

We may now formally conclude the proof of our main result.

\begin{proof}[Proof of Theorem~\ref{thm:main-technical}]
Let $\Cc=\Dd_0\subseteq \Dd_1\subseteq\ldots\subseteq \Dd_p=\Cc^\clw$ be the sequence of hereditary classes provided by Lemma~\ref{lem:hierarchy}. By induction on $i$ we prove that each class $\Dd_i$ is polynomially $\chi$-bounded, with the base case provided by the assumption that $\Cc=\Dd_0$ is polynomially $\chi$-bounded.
By Lemma~\ref{lem:hierarchy}, the class $\Dd_i$ is contained in the union of two graph classes: the class of graphs admitting shallow $\Dd_{i-1}$-governed decompositions of diversity at most $k$, and the class of graphs admitting splendid $\Dd_{i-1}$-governed decompositions of diversity at most $k$. The induction assumption and Lemma~\ref{lem:depth-2} imply that the former class is polynomially $\chi$-bounded, and similarly from Lemma~\ref{lem:splendid} we conclude that the latter class is polynomially $\chi$-bounded. As the union of two polynomially $\chi$-bounded classes is polynomially $\chi$-bounded, we infer that $\Dd_i$ is polynomially $\chi$-bounded, which proves the induction step. Thus $\Dd_p=\Cc^\clw$ is polynomially $\chi$-bounded and we are done.
\end{proof}

\section{Conclusions}\label{sec:concl}

\subsection{Discussion of the asymptotics}\label{sec:asymptotics}

As stated, our main result only asserts that if the class $\Cc$ is polynomially $\chi$-bounded then $\Cc^\clw$ is polynomially $\chi$-bounded as well. However, we find it instructive to trace the explosion of $\chi$-bounding polynomials throughout the proof.

The proof of Theorem~\ref{thm:substitution} in~\cite{ChudnovskyPST13} shows that if $\Cc$ is $\chi$-bounded by the polynomial $t\mapsto t^A$, for a positive integer $A$, then $\Cc^\star$ is $\chi$-bounded by the polynomial $t\mapsto t^{3A+11}$. Thus, Lemma~\ref{lem:splendid} incurs a blow-up of the $\chi$-bounding polynomial from $t\mapsto t^A$ (for $\Cc)$ to $t\mapsto k\cdot t^{6A+22}$ (for the class in the conclusion of the lemma). Similarly, Lemma~\ref{lem:depth-2} incurs a blow-up of the $\chi$-bounding function from $t\mapsto t^A$ to $t\mapsto t^{2A}$. Thus, in the proof of Theorem~\ref{thm:main-technical} we may conclude that if $\Dd_{i-1}$ is $\chi$-bounded by the polynomial $t\mapsto t^A$, then $\Dd_i$ is $\chi$-bounded by $t\mapsto k\cdot t^{6A+22}$. Along the induction this blow-up occurs $2^{\Oh(k\log k)}$ times, hence we can reach the following conclusion.

\begin{corollary}
If $\Cc$ is $\chi$-bounded by the polynomial $t\mapsto t^A$ for some positive integer $A$, then $\Cc^{\clw}$ is $\chi$-bounded by a polynomial of the form $t\mapsto t^{A\cdot 2^{2^{\Oh(k\log k)}}}$.
\end{corollary}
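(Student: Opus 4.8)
The plan is to rerun the induction from the proof of Theorem~\ref{thm:main-technical}, this time tracking the degree of the $\chi$-bounding polynomial at every level. First I would record the quantitative content of the two building blocks. As already noted in the paragraph preceding the corollary, the proof of Theorem~\ref{thm:substitution} in~\cite{ChudnovskyPST13} shows that if a hereditary class $\Dd$ is $\chi$-bounded by $t\mapsto t^{a}$, then $\Dd^{\star}$ is $\chi$-bounded by $t\mapsto t^{3a+11}$. Feeding this into the proof of Lemma~\ref{lem:splendid} --- whose output coloring uses $k\cdot g(\omega(G))^2$ colors, where $g$ is the $\chi$-bounding polynomial of $\Dd^{\star}$ --- shows that the class of graphs admitting splendid $\Dd$-governed decompositions of diversity at most $k$ is $\chi$-bounded by $t\mapsto k\cdot t^{6a+22}$. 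Similarly, the proof of Lemma~\ref{lem:depth-2} shows that the class of graphs admitting shallow $\Dd$-governed decompositions is $\chi$-bounded by $t\mapsto t^{2a}$.

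Next I would set up the recurrence along the chain $\Cc=\Dd_0\subseteq\Dd_1\subseteq\ldots\subseteq\Dd_p=\Cc^{\clw}$ of Lemma~\ref{lem:hierarchy}. The only value of $\omega(G)$ at which a pure power can fail is $\omega(G)=1$, where $G$ is edgeless and $\chi(G)=1$; hence a multiplicative constant can always be absorbed into the exponent: a class $\chi$-bounded by $t\mapsto k\cdot t^{b}$ is also $\chi$-bounded by $t\mapsto t^{b+\lceil\log_2 k\rceil}$ (check $t=1$ and $t\ge 2$ separately). Maintaining the invariant that $\Dd_i$ is $\chi$-bounded by a pure power $t\mapsto t^{A_i}$, the two bounds above, combined via the fact that a union of two polynomially $\chi$-bounded classes is polynomially $\chi$-bounded (and using $6A_{i-1}+22\ge 2A_{i-1}$, so the splendid case dominates), give the recurrence $A_i\le 6A_{i-1}+22+\lceil\log_2 k\rceil$ with $A_0=A$.

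Finally I would solve the recurrence and substitute $p$. Writing $c_k=22+\lceil\log_2 k\rceil=\Oh(\log k)$, unrolling yields $A_p=6^{p}A+c_k\cdot\frac{6^{p}-1}{5}\le 6^{p}(A+c_k)$. By Lemma~\ref{lem:hierarchy} (with the choice $p=3|\Ff|=3k^{k}$ made in its proof) we have $p\in 2^{\Oh(k\log k)}$, so $6^{p}=2^{p\log_2 6}=2^{2^{\Oh(k\log k)}}$. Since $A\ge 1$ we have $A+c_k\le A(1+c_k)$, and $1+c_k$ is trivially bounded by $2^{2^{\Oh(k\log k)}}$; the product of these two doubly-exponential bounds is again of the form $2^{2^{\Oh(k\log k)}}$. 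Hence $A_p\le A\cdot 2^{2^{\Oh(k\log k)}}$, so $\Cc^{\clw}=\Dd_p$ is $\chi$-bounded by $t\mapsto t^{A_p}$, and therefore by $t\mapsto t^{A\cdot 2^{2^{\Oh(k\log k)}}}$, as required.

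I do not expect a genuine mathematical obstacle: the argument is pure bookkeeping on top of Lemmas~\ref{lem:hierarchy}, \ref{lem:depth-2} and~\ref{lem:splendid}. The one point needing care is the multiplicative factor $k$ that Lemma~\ref{lem:splendid} introduces at each of the $p=2^{\Oh(k\log k)}$ levels: handled naively it would compound to a factor $k^{2^{p}}$ in front of the polynomial, but because the sole problematic argument $\omega(G)=1$ is trivial, this factor can be folded into the exponent as an additive $\Oh(\log k)$ at each level, where it is dwarfed by the repeated ``$a\mapsto 6a$'' blow-up and contributes nothing to the final doubly-exponential bound.
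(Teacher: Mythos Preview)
Your proposal is correct and follows exactly the approach the paper takes: the corollary is stated in the paper as an immediate consequence of the quantitative discussion preceding it, which traces the blow-ups $t^A\mapsto t^{2A}$ (shallow), $t^A\mapsto k\cdot t^{6A+22}$ (splendid, via the $3A+11$ bound from~\cite{ChudnovskyPST13}), and iterates $p\in 2^{\Oh(k\log k)}$ times along the chain from Lemma~\ref{lem:hierarchy}. Your write-up is in fact slightly more careful than the paper's, since you make explicit how the multiplicative factor $k$ is absorbed into the exponent at each step (via the $t=1$ edgeless case and $k\le 2^{\lceil\log_2 k\rceil}$ for $t\ge 2$), a point the paper leaves implicit.
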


For the case of graphs of cliquewidth at most $k$, which were the primary motivation of this work, this gives an upper bound of $t\mapsto t^{2^{2^{\Oh(k\log k)}}}$ on the $\chi$-bounding polynomial, because, by Lemma~\ref{lem:cwk}, the base class $\Cc$ is formed by bipartite graphs, which are perfect (i.e., $A=1$). 

On the other hand, in the proof of Proposition~2.4 of~\cite{ChudnovskyPST13}, Chudnovsky et al. argued that if $F$ is a triangle-free graph with fractional chromatic number larger than $2^d$, and we inductively define $F_1=F$ and $F_{i+1}$ as $F$ with every vertex substituted with $F_i$, then
\begin{itemize}
    \item $\omega(F_i)=2^i$; and
    \item $\chi(F_i)>2^{id}$.
\end{itemize}
It is easy to see that each graph $F_i$ has cliquewidth at most $|V(F)|$. As a graph $F$ with properties as above can be chosen so that it has $\Oh(2^{2d})$ vertices, see the discussion in~\cite{ChudnovskyPST13}, this proves the following corollary.

\begin{lemma}\label{lem:grows-with-k}
For every $k\in \N$, if $g$ is a $\chi$-bounding polynomial for the class of graphs of cliquewidth at most $k$, then the degree of $g$ is at least $\Omega(\log k)$.
\end{lemma}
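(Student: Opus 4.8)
The plan is to follow the explicit construction sketched in the paragraph preceding the statement, using the graphs $F_i$ obtained by repeated substitution of a fixed triangle-free graph $F$ into itself. First I would fix a triangle-free graph $F$ with fractional chromatic number strictly larger than $2^d$, for a parameter $d$ to be chosen. Such $F$ exists, and moreover — by the standard bound relating fractional chromatic number to the number of vertices for triangle-free graphs, or by the probabilistic constructions of Erd\H{o}s — one can take $F$ with $n := |V(F)| = \Oh(2^{2d})$ vertices; this choice of parameters is the one referenced in~\cite{ChudnovskyPST13}.

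Next I would verify the two displayed bullet points for the iterated substitution $F_1 = F$, $F_{i+1} = F[\alpha]$ with $\alpha(u) = F_i$ for every vertex $u$. For the clique number: since $F$ is triangle-free, a clique in $F[\alpha]$ meets at most two of the substituted copies, so $\omega(F_{i+1}) = 2\,\omega(F_i)$, giving $\omega(F_i) = 2^i$ by induction (using $\omega(F)=2$). For the chromatic number I would use the multiplicativity of the fractional chromatic number under substitution into a graph of fractional chromatic number $> 2^d$, together with the inequality $\chi \geq \chi_f$; inducting gives $\chi(F_i) \geq \chi_f(F_i) > 2^{id}$. I would also record that each $F_i$ has cliquewidth at most $n$: a graph on $n$ vertices trivially has cliquewidth at most $n$, and substitution does not increase cliquewidth beyond $n$ since each substituted copy can be built, contracted to a single label, and then joined according to the edges of $F$, reusing labels — so $F_i$ has cliquewidth at most $n$ for all $i$.

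Then I would combine these facts. Set $k := n = \Oh(2^{2d})$, so every $F_i$ belongs to the class of graphs of cliquewidth at most $k$. If $g$ is a $\chi$-bounding polynomial for this class, then for every $i$ we must have $2^{id} < \chi(F_i) \leq g(\omega(F_i)) = g(2^i)$. Writing $g(t) \leq C\,t^{\deg g}$ for large $t$ and some constant $C$, this yields $2^{id} < C\, 2^{i\deg g}$ for all $i$, and letting $i \to \infty$ forces $\deg g \geq d$. Finally I translate back: since $k = \Oh(2^{2d})$ we have $d = \Omega(\log k)$, hence $\deg g = \Omega(\log k)$, which is exactly the claimed bound.

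The only delicate point — and the one I would be most careful about — is the behaviour of the fractional chromatic number under substitution: one needs that $\chi_f(G[\alpha]) \geq \chi_f(G) \cdot \min_u \chi_f(\alpha(u))$ (in fact equality holds when all $\alpha(u)$ are equal), which is the lexicographic-product lower bound for $\chi_f$. This is a known lemma, and it is precisely the ingredient invoked implicitly in~\cite{ChudnovskyPST13}; invoking it is legitimate here since Lemma~\ref{lem:grows-with-k} is stated as merely recalling their construction. Everything else — the clique-number recursion, the cliquewidth bound, and the final asymptotic bookkeeping — is routine. I would not belabour the existence of the triangle-free $F$ with the stated vertex count either, simply citing the discussion in~\cite{ChudnovskyPST13}.
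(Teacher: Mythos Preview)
Your proposal is correct and follows exactly the argument the paper sketches in the paragraph immediately preceding the lemma: take a triangle-free $F$ with $\chi_f(F)>2^d$ and $|V(F)|=\Oh(2^{2d})$, iterate substitution to get $\omega(F_i)=2^i$ and $\chi(F_i)>2^{id}$, observe each $F_i$ has cliquewidth at most $|V(F)|$, and read off $\deg g\geq d=\Omega(\log k)$. The additional details you supply (the clique-number recursion, the lexicographic-product bound for $\chi_f$, the cliquewidth argument) are the natural unpackings of the paper's sketch and are all correct.
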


While Lemma~\ref{lem:grows-with-k} ensures us that the degree of the $\chi$-bounding polynomial for graphs of cliquewidth at most $k$ needs to grow with $k$, there is still a significant gap between the upper bound --- two-fold exponential in $k$ --- and the lower bound --- logarithmic in $k$.

\subsection{$1$-joins and graphs with excluded vertex-minor}\label{sec:wheel-free}

For two graphs $G_1$ and $G_2$ with distinguished vertices $u_1$ and $u_2$, respectively, their {\em{$1$-join}} is the graph obtained from their disjoint union by removing $u_1$ and $u_2$ and making every former neighbor of $u_1$ adjacent to every former neighbor of $u_2$.
If $\Cc$ is a class of graphs, then by $\Cc^{\&}$ we denote the closure of $\Cc$ under repeated application of the $1$-join operation.
As observed by Dvo\v{r}\'ak and Kr\'al' in~\cite{DvorakK12}, such closure under $1$-joins corresponds to having a decomposition of diversity at most $2$ in the following sense.

\begin{lemma}[\cite{DvorakK12}]\label{lem:one-join-decomp}
If $\Cc$ is a hereditary class of graphs closed under adding isolated vertices and substituting vertices with edgeless graphs, then every graph from $\Cc^{\&}$ admits a $\Cc$-governed decomposition of diversity at most $2$.
\end{lemma}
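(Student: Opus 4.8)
The plan is to argue by induction on the number of $1$-join operations used to build $G\in\Cc^{\&}$ from members of $\Cc$. The base case $G\in\Cc$ is handled by a two-level decomposition: a root $r$ to which a single vertex $v$ of $G$ is assigned, a single child $c$ of $r$ carrying all the other vertices, and nothing else. Then $\ang{G}{c}=G-v\in\Cc$ since $\Cc$ is hereditary; $\sim_c$ has at most two classes, because a vertex of $\ang{V}{c}$ is determined up to $\sim_c$ by whether or not it is adjacent to $v$; and $\ang{G}{r}$ is a star centred at $v$ together with some isolated vertices. To see that such a graph lies in $\Cc$ I would first isolate a dichotomy: either every graph in $\Cc$ is edgeless --- in which case the $1$-join of two edgeless graphs is again edgeless, so $\Cc^{\&}=\Cc$ and there is nothing to prove --- or $K_2\in\Cc$, and then, since $\Cc$ is closed under substituting vertices with edgeless graphs and under adding isolated vertices, $\Cc$ contains every complete bipartite graph with an arbitrary number of isolated vertices added, and in particular every star plus isolated vertices.

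For the inductive step, write $G$ as the $1$-join of $G_1$ and $G_2$ at distinguished vertices $u_1$ and $u_2$, and let $A_1=N^{G_1}(u_1)$, $A_2=N^{G_2}(u_2)$; thus $E(G)$ is the disjoint union of $E(G_1-u_1)$, $E(G_2-u_2)$ and a complete bipartite graph between $A_1$ and $A_2$. Apply the induction hypothesis to $G_1$ and $G_2$ to obtain $\Cc$-governed decompositions of diversity at most $2$, arranged --- by the strengthening discussed below --- so that $u_1$ is the unique vertex at the root of the first and $u_2$ the unique vertex at the root of the second. Now build a decomposition of $G$ by taking the first tree, attaching the second one as an extra child of its root, and discarding the vertices $u_1$ and $u_2$ (which do not belong to $G$); every vertex of $G$ keeps the node assigned to it by $\eta_1$ or $\eta_2$. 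The key observation is that the torso at this joined root equals the root torso $\ang{G_1}{\cdot}$ of the decomposition of $G_1$ with the vertex $u_1$ substituted by an edgeless graph on $A_2$, together with the remaining vertices of $G_2-u_2$ as isolated vertices; hence it lies in $\Cc$ by the closure assumptions, exactly as in the base case. Every other torso coincides with, or is an induced subgraph of, a torso of one of the two given decompositions, so it remains in $\Cc$ as well.

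The heart of the matter is keeping the diversity at $2$. At a node $y$ lying strictly above $u_1$'s node, the new cross-edges to $A_2$ interact with $\sim_y$: they can refine its classes according to adjacency to $u_1$, turning two classes into as many as four, and, viewed from such a $y$, the vertices on the $G_2$-side contribute still further classes. This is precisely why the distinguished vertices must sit at the roots of their decompositions --- then no such node $y$ exists --- and it forces a strengthening of the induction hypothesis of the form: for every $v\in V(G)$ there is a $\Cc$-governed decomposition of $G$ of diversity at most $2$ in which $v$ is the unique vertex at the root (say, with the root having a single child, which keeps the root torso edgeless). The delicate point is making this stronger statement self-propagating: in the inductive step the construction above naturally parks $u_1$ at the root of the glued tree, whereas the statement for $G$ asks that a \emph{prescribed} vertex $v$ of $G$ be there. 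Reconciling this seems to require carrying a second marked vertex through the argument and asking the decompositions supplied by the induction hypothesis to additionally have the equivalence class of the distinguished vertex trivial at every level, so that it may be lifted to the root without raising the diversity above $2$. I expect this bookkeeping --- rather than any of the individual verifications above --- to be the main obstacle.
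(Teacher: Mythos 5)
The paper itself does not prove this lemma --- it explicitly defers to the proofs of Theorems~2 and~3 of Dvo\v{r}\'ak and Kr\'al'~\cite{DvorakK12} --- so your attempt can only be judged on its own terms, and on those terms it has a genuine gap, which you yourself flag: the strengthened induction hypothesis (``for every $v\in V(G)$ there is a $\Cc$-governed decomposition of diversity at most $2$ with $v$ alone at the root'') is essential to your inductive step, but you never show it propagates. Concretely, your glued tree has an \emph{empty} root (both $u_1$ and $u_2$ are discarded), and the prescribed vertex $v$ of $G$ sits at whatever node $\eta_1$ or $\eta_2$ gave it. Lifting $v$ to a new root afterwards does not work in general: at every node $x$ that was an ancestor of $\eta(v)$, removing $v$ from $\ang{V}{x}$ makes $\sim_x$ additionally distinguish vertices by adjacency to $v$, which is not determined by the old classes, so the diversity can jump to $3$ or $4$. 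The alternative of applying the hypothesis to $G_1$ with prescribed vertex $v$ (so that $u_1$ lands at some internal node $z$, with the $G_2$-tree attached below $z$) also fails: the torsos are still fine (substituting $u_1$ by an edgeless copy of $A_2$ at each ancestor of $z$), and the $A_2$-vertices inherit the class of $u_1$, but the $G_2$-vertices \emph{not} in $A_2$ have empty neighbourhood outside $\ang{V}{x}$ at every node $x$ strictly between the root and $z$, creating a possible third class there. Your proposed remedies (a second marked vertex; requiring the class of the distinguished vertex to be trivial at every level) are not shown to be self-propagating, and the latter condition does not by itself make lifting $v$ safe, since triviality of $v$'s class says nothing about whether adjacency to $v$ is constant on the other classes. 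So the ``bookkeeping'' you defer is exactly the missing proof.

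The parts you do verify are sound and are the right local computations (the substitution/isolated-vertices description of the joined root torso, the $K_2$-dichotomy for stars, and the diversity check away from the problematic nodes), but the cleaner way to close the argument --- and essentially what Dvo\v{r}\'ak and Kr\'al' do --- is to avoid adding one $1$-join at a time altogether. Represent $G\in\Cc^{\&}$ by the whole tree of $1$-joins at once: a tree whose nodes are graphs from $\Cc$ with marker vertices, adjacent nodes being joined at their markers. Root this tree arbitrarily and let $\eta$ send each real vertex to its bag. Then for every node $x$, the cut $(\ang{V}{x},V(G)\setminus\ang{V}{x})$ is a $1$-join cut, so $\sim_x$ has at most two classes (vertices attached to the marker towards the parent, and the rest), and each torso is obtained from the bag at $x$ by deleting the parent-direction marker, substituting each child-direction marker with an edgeless graph, and adding isolated vertices, hence lies in $\Cc$ by exactly the closure assumptions in the statement. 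This global construction makes your prescribed-root strengthening unnecessary (and, incidentally, shows it is true). Two minor slips in your write-up: the root torso in your base case is a star plus isolated vertices, not edgeless, so the parenthetical in your strengthened statement is inconsistent with your own computation; and the edgeless-class branch of your dichotomy should note that such a $\Cc$ contains \emph{all} edgeless graphs (via heredity and adding isolated vertices) before concluding $\Cc^{\&}=\Cc$.
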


We remark that Dvo\v{r}\'ak and Kr\'al' speak about $\Cc$-bounded decompositions of rank at most~$1$, which in our terminology are $\Cc$-governed decompositions of diversity at most $2$.
We refrain from giving a formal proof of Lemma~\ref{lem:one-join-decomp} here, because a reasoning leading to it can be found essentially verbatim in the proofs of Theorems~2 and~3 of~\cite{DvorakK12}.

The main result of Dvo\v{r}\'ak and Kr\'al' states (in our terminology) that if a hereditary graph class $\Cc$ is $\chi$-bounded, then for every fixed $k\in \N$ the class of graphs that admit a $\Cc$-governed decomposition of diversity at most $k$ is also $\chi$-bounded. Thus, by combining Lemma~\ref{lem:one-join-decomp} with the main result of~\cite{DvorakK12} on one side, and with our Theorem~\ref{thm:main-technical} on the other side, both for $k=2$, we obtain the following corollary.

\begin{corollary}\label{cor:one-joins}
If $\Cc$ is a hereditary graph class that is $\chi$-bounded, then $\Cc^{\&}$ is also $\chi$-bounded. If $\Cc$ is moreover polynomially $\chi$-bounded, then so is $\Cc^{\&}$ as well.
\end{corollary}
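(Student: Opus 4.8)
The goal is to prove Corollary~\ref{cor:one-joins}, which asserts that closing a hereditary (polynomially) $\chi$-bounded class $\Cc$ under the $1$-join operation preserves (polynomial) $\chi$-boundedness.

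\medskip

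The plan is to reduce both statements to results already established in the excerpt, via Lemma~\ref{lem:one-join-decomp}. First I would handle a mild technicality: Lemma~\ref{lem:one-join-decomp} requires $\Cc$ to be closed under adding isolated vertices and under substituting vertices with edgeless graphs. So given an arbitrary hereditary $\chi$-bounded $\Cc$, I would pass to the smallest hereditary class $\Cc'$ containing $\Cc$ and closed under these two operations --- concretely, $\Cc'$ is obtained by adding to $\Cc$ all graphs of the form $G[\alpha]$ where $G\in\Cc$ and each $\alpha(u)$ is edgeless, and then taking the hereditary closure. One checks that $\Cc'$ is still $\chi$-bounded: substituting edgeless graphs into the vertices of $G$ does not change $\omega$ (an edgeless graph contributes at most one vertex to any clique) and does not increase $\chi$ beyond $\chi(G)$ (color $(u,v)$ by the color of $u$), and adding isolated vertices changes neither parameter; taking induced subgraphs only helps. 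If $\Cc$ is polynomially $\chi$-bounded by $g$, then so is $\Cc'$, with the same polynomial. Moreover $\Cc^{\&}\subseteq (\Cc')^{\&}$, so it suffices to prove the statement for $\Cc'$; rename $\Cc'$ back to $\Cc$.

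\medskip

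Now every graph $G\in\Cc^{\&}$ admits, by Lemma~\ref{lem:one-join-decomp}, a $\Cc$-governed decomposition of diversity at most $2$; that is, $G\in \Cc^{\clw}$ for the choice $k=2$. For the first part of the corollary, I would invoke the main result of Dvo\v{r}\'ak and Kr\'al'~\cite{DvorakK12}, quoted in the excerpt: a $\Cc$-governed decomposition of diversity at most $k$ (for $k=2$, in particular) witnesses membership in a $\chi$-bounded class whenever $\Cc$ is hereditary and $\chi$-bounded. Hence $\Cc^{\&}$ is $\chi$-bounded. For the second part, I would instead invoke Theorem~\ref{thm:main-technical} with $k=2$: since $\Cc$ is hereditary and polynomially $\chi$-bounded, the class of graphs admitting $\Cc$-governed decompositions of diversity at most $2$ is polynomially $\chi$-bounded, and $\Cc^{\&}$ is contained in it, so $\Cc^{\&}$ is polynomially $\chi$-bounded. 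This completes the proof.

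\medskip

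There is essentially no hard step here --- the corollary is designed as an immediate consequence. The only point requiring (routine) care is the closure technicality in the first paragraph: making sure that enlarging $\Cc$ to satisfy the hypotheses of Lemma~\ref{lem:one-join-decomp} neither destroys heredity nor worsens the $\chi$-bounding function, and that the $1$-join closure of the enlarged class still contains $\Cc^{\&}$. Everything after that is a direct citation of Lemma~\ref{lem:one-join-decomp} together with, respectively, the Dvo\v{r}\'ak--Kr\'al' theorem and our Theorem~\ref{thm:main-technical}.
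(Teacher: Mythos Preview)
Your proposal is correct and follows essentially the same approach as the paper: combine Lemma~\ref{lem:one-join-decomp} (for $k=2$) with the Dvo\v{r}\'ak--Kr\'al' result for the $\chi$-bounded case and with Theorem~\ref{thm:main-technical} for the polynomial case. The paper treats this as an immediate consequence and does not spell out the closure technicality you handle in your first paragraph, so your write-up is, if anything, more careful than the original.
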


The first part of Corollary~\ref{cor:one-joins} was explicitely proved by Dvo\v{r}\'ak and Kr\'al' in~\cite{DvorakK12}, while the second was proved by Kim et al.~\cite{KiKOS19} using an argument tailored to $1$-joins; hence, this is not a new result. 

Corollary~\ref{cor:one-joins} is particularly useful when working with classes of graphs defined by forbidding vertex-minors. Recall here that $H$ is a {\em{vertex-minor}} of a graph $G$ if $H$ can be obtained from $G$ by a sequence of operations: vertex deletion and {\em{local complementation}}, which amounts to swapping the adjacency and non-adjacency relations in the neighborhood of a vertex.
It turns out that for some simple graphs $H$ we have decomposition results of the following type: every $H$-vertex-minor-free graph either can be obtained as a $1$-join of two smaller graphs, or belongs to a simpler class of graphs. For instance, for the case when $H$ is the wheel $W_5$, Geelen~\cite{geelen-thesis} gave such a decomposition theorem where the simpler class of graphs comprises of all circle graphs and a few graphs on at most $8$ vertices. By combining this result with the first part of the statement of Corollary~\ref{cor:one-joins} and the fact that circle graphs are $\chi$-bounded~\cite{Gyarfas85,Gyarfas85corr}, Dvo\v{r}\'ak and Kr\'al' concluded in~\cite{DvorakK12} that the class of $W_5$-vertex-minor-free graphs is $\chi$-bounded. Note here that by using the second part of statement of Corollary~\ref{cor:one-joins} and the recent result of Davies and McCarty that circle graphs are quadratically $\chi$-bounded~\cite{DaviesMcC19}, we can in the same manner argue that $W_5$-vertex-minor-free graphs are polynomially $\chi$-bounded. In a similar manner --- by using the second part of statement of Corollary~\ref{cor:one-joins} together with a suitable decomposition result using $1$-joins --- Kim et al.~\cite{KiKOS19} proved that the class of $C_\ell$-vertex-minor-free graphs is polynomially $\chi$-bounded, for every $\ell\geq 3$.

It seems naive to assume that a suitable structural result involving just $1$-joins would hold in the class of $H$-vertex-minor-free graphs, for every fixed graph $H$. However, it is believed that graphs with a fixed excluded vertex-minor should admit a decomposition theorem of a form analogous to the decomposition theorem for proper minor-closed classes of Robertson and Seymour~\cite{RobertsonS03a}, where instead of $1$-joins we use cuts of larger diversity. That is, if $\Dd$ is a class closed under taking vertex-minors that does not contain all graphs, then there exists $k\in \N$ such that every graph from $\Dd$ admits a $\Cc$-governed decomposition of diversity at most $k$, where $\Cc$ is some class that generalizes circle graphs in a similar way as nearly embeddable graphs generalize planar graphs. 
Thus, the work of Dvo\v{r}\'ak and Kr\'al'~\cite{DvorakK12} suggests a possible route towards proving the following conjecture attributed to Geelen:

\begin{conjecture}\label{conj:vertex-minor}
For every graph $H$, the class of graphs that exclude $H$ as a vertex-minor is $\chi$-bounded.
\end{conjecture}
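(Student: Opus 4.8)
The plan is to prove Conjecture~\ref{conj:vertex-minor} by reducing it to two independent ingredients and then combining them through Theorem~\ref{thm:main-technical}. Fix a graph $H$ and let $\Dd$ be the class of all graphs with no $H$ vertex-minor; since vertex-minors are preserved under vertex deletion and local complementation, $\Dd$ is closed under taking vertex-minors and in particular hereditary. The first ingredient I would aim for is a \emph{structure theorem} for $\Dd$, in analogy with the Robertson--Seymour structure theorem for proper minor-closed classes~\cite{RobertsonS03a}: an integer $k = k(H)$ and a ``basic'' hereditary class $\Cc = \Cc(H)$ such that every $G \in \Dd$ admits a $\Cc$-governed decomposition of diversity at most $k$, where $\Cc$ should be a relaxation of the class of circle graphs playing the role that ``nearly embeddable'' graphs play for planar graphs. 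The second ingredient is that this basic class $\Cc$ is (polynomially) $\chi$-bounded. Granting both, the conjecture follows immediately: one feeds $\Cc$ and $k$ into Theorem~\ref{thm:main-technical} (or, if one is content with plain $\chi$-boundedness and a possibly non-polynomial base class, into the main theorem of Dvo\v{r}\'ak and Kr\'al'~\cite{DvorakK12}) to conclude that $\Dd$ is (polynomially) $\chi$-bounded.

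It is worth noting how the known partial results fit this template. For $H = W_5$, Geelen's decomposition theorem~\cite{geelen-thesis} is precisely such a statement with $k = 2$ (via Lemma~\ref{lem:one-join-decomp} and Corollary~\ref{cor:one-joins}) and $\Cc$ essentially the class of circle graphs together with finitely many small graphs; the same applies to the $C_\ell$-vertex-minor-free case of Kim et al.~\cite{KiKOS19}. In these cases the base class is (polynomially) $\chi$-bounded by the classical results on circle graphs~\cite{Gyarfas85,Gyarfas85corr} and their recent quantitative strengthening~\cite{DaviesMcC19}. So the route is already fully functional whenever one can identify the right $k$ and $\Cc$; what our Theorem~\ref{thm:main-technical} adds is that the gluing step automatically upgrades polynomial $\chi$-boundedness of $\Cc$ to polynomial $\chi$-boundedness of $\Dd$, so the conjecture would in fact follow in its stronger polynomial form.

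The hard part, and the real content of the conjecture, will be the structure theorem for vertex-minor-closed classes; this is wide open. Several substantial obstacles stand in the way. First, one must correctly identify the basic class $\Cc$: it has to be broad enough to contain every ``unavoidable'' obstruction to bounded rankwidth that does not already force an $H$ vertex-minor, yet narrow enough to remain $\chi$-bounded --- and even verifying $\chi$-boundedness of a suitably rich ``nearly circle'' class is itself a nontrivial task, not covered by existing techniques. Second, one needs an analogue of the graph-minors toolkit --- tangles, well-quasi-ordering, flat-wall-type theorems --- developed natively for rankwidth and vertex-minors, and then a way to assemble local pieces into a single decomposition along cuts of bounded cutrank (equivalently, bounded diversity, after the translation between rankwidth and cliquewidth recalled in the introduction). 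Thus I would regard the present paper as removing the \emph{combinatorial bottleneck at the gluing stage} of this program: once the structural decomposition is available, nothing more than Theorem~\ref{thm:main-technical} is needed, and it delivers the polynomial bound for free.
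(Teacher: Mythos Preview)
The statement you are addressing is a \emph{conjecture}, not a theorem: the paper does not prove it, and explicitly presents it as open (attributed to Geelen). Your proposal is not a proof either, and you say so yourself --- the structure theorem for vertex-minor-closed classes that you identify as the ``hard part'' is indeed the missing ingredient, and remains open.

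That said, your discussion of the intended route is essentially identical to the paper's own discussion surrounding the conjecture in Section~\ref{sec:wheel-free}: the paper also frames the problem as seeking a Robertson--Seymour-style decomposition into a ``nearly circle'' base class $\Cc$ along cuts of bounded diversity, cites the same special cases ($W_5$ via Geelen, $C_\ell$ via Kim et al.), and makes the same point that Theorem~\ref{thm:main-technical} reduces polynomial $\chi$-boundedness of $\Dd$ to that of $\Cc$. So as a description of the program, your write-up is accurate and aligned with the paper; just be aware that neither you nor the paper is claiming a proof.
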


More precisely, the main result  of Dvo\v{r}\'ak and Kr\'al'~\cite{DvorakK12} reduces proving $\chi$-boundedness of the class $\Dd$ to proving $\chi$-boundedness of $\Cc$, which may follow from a lift of the reasoning for circle graphs. Similarly, our Theorem~\ref{thm:main-technical} reduces proving polynomial $\chi$-boundedness of $\Dd$ to proving polynomial $\chi$-boundedness of~$\Cc$. Together with the quadratic $\chi$-boundedness of circle graphs~\cite{DaviesMcC19}, this suggests that in Conjecture~\ref{conj:vertex-minor} one might expect even polynomial $\chi$-boundedness. This question was also posed by Kim et al., see~\cite[Question 1.4]{KiKOS19}. 

Very recently, Geelen et al.~\cite{geelen2019grid} proved that for any circle graph $H$, the class of graphs that exclude $H$ as a vertex-minor has bounded rankwidth. Together with our result, this implies that such a class is polynomially $\chi$-bounded.


\section*{Acknowledgments} 
We would like to express our gratitude to Rose McCarty for communicating the problem to us and for bringing the works of Chudnovsky et al.~\cite{ChudnovskyPST13} and of Kim et al.~\cite{KiKOS19} to our attention. We thank the anonymous reviewers for their efficiency and their help in improving the presentation of this paper.

\bibliographystyle{amsplain}
\newcommand{\etalchar}[1]{$^{#1}$}


\begin{aicauthors}
\begin{authorinfo}[mb]
  Marthe Bonamy\\
  CNRS, LaBRI, Universit\'e de Bordeaux\\
  Bordeaux, France\\
  marthe\imagedot{}bonamy\imageat{}u-bordeaux\imagedot{}fr \\
  \url{https://www.labri.fr/perso/mbonamy/}
\end{authorinfo}
\begin{authorinfo}[mp]
  Micha\l \ Pilipczuk\\
  Institute of Informatics, University of Warsaw\\ 
  Warsaw, Poland\\
  michal\imagedot{}pilipczuk\imageat{}mimuw\imagedot{}edu\imagedot{}pl \\
  \url{https://www.mimuw.edu.pl/~mp248287/}
\end{authorinfo}
\end{aicauthors}

\end{document}